\DeclareMathOperator*{\argmax}{arg\,max}
\newcommand{\I}{i}
\newcommand{\mc}[1]{\mathcal{#1}}
\newcommand{\wt}[1]{\widetilde{#1}}
\newcommand{\innerp}[2]{\left\langle #1 \vert #2 \right\rangle}
\newcommand{\abs}[1]{\left\lvert#1\right\rvert}
\newcommand{\norm}[1]{\left\lVert#1\right\rVert}
\newcommand{\rd}{\,\mathrm{d}}
\newcommand{\Or}{\mathcal{O}}
\newcommand{\ZZ}{\mathbb{Z}}
\newtheorem{thm}{\protect\theoremname}
\newtheorem{lem}[thm]{\protect\lemmaname}
\newtheorem{prop}[thm]{\protect\propositionname}
\newtheorem{prob}{\protect\problemname}
\providecommand{\definitionname}{Definition}
\providecommand{\assumptionname}{Assumption}
\providecommand{\corollaryname}{Corollary}
\providecommand{\lemmaname}{Lemma}
\providecommand{\propositionname}{Proposition}
\providecommand{\remarkname}{Remark}
\providecommand{\theoremname}{Theorem}
\providecommand{\problemname}{Problem}
\newenvironment{breakablealgorithm}
  {
   \begin{center}
     \refstepcounter{algorithm}
     \hrule height.8pt depth0pt \kern2pt
     \renewcommand{\caption}[2][\relax]{
       {\raggedright\textbf{\fname@algorithm~\thealgorithm} ##2\par}%
       \ifx\relax##1\relax 
         \addcontentsline{loa}{algorithm}{\protect\numberline{\thealgorithm}##2}%
       \else 
         \addcontentsline{loa}{algorithm}{\protect\numberline{\thealgorithm}##1}%
       \fi
       \kern2pt\hrule\kern2pt
     }
  }{
     \kern2pt\hrule\relax
   \end{center}
  }
\tikzset{%
  highlight/.style={rectangle,rounded corners,fill=blue!15,draw,fill opacity=0.3,thick,inner sep=0pt}
}
\begin{document}

\title{Robust ground-state energy estimation under depolarizing noise}

\newcommand{\DeptMath}{Department of Mathematics, University of California, Berkeley, CA 94720, USA}
\newcommand{\LBLMath}{Applied Mathematics and Computational Research Division, Lawrence Berkeley National Laboratory, Berkeley, CA 94720, USA}
\newcommand{\Caltech}{
Institute for Quantum Information and Matter, California Institute of Technology, Pasadena, CA 91125, USA}
\newcommand{\CIQC}{Challenge Institute of Quantum Computation, University of California, Berkeley, CA 94720, USA}

\author{Zhiyan Ding} 
\thanks{These authors contributed equally.}
\affiliation{\DeptMath}
\author{Yulong Dong} 
\thanks{These authors contributed equally.}
\affiliation{\DeptMath}
\author{Yu Tong}
\affiliation{\Caltech}
\author{Lin Lin}
\email{Electronic address: linlin@math.berkeley.edu}
\affiliation{\DeptMath}
\affiliation{\LBLMath}
\affiliation{\CIQC}

\begin{abstract}
We present a novel ground-state energy estimation algorithm that is robust under global depolarizing error channels. Building upon the recently developed Quantum Exponential Least Squares (QCELS) algorithm, our new approach incorporates significant advancements to ensure robust estimation while maintaining a polynomial cost in precision. By leveraging the spectral gap of the Hamiltonian effectively, our algorithm overcomes limitations observed in previous methods like quantum phase estimation (QPE) and robust phase estimation (RPE). Going beyond global depolarizing error channels, our work underscores the significance and practical advantages of utilizing randomized compiling techniques to tailor quantum noise towards depolarizing error channels. Our research demonstrates the feasibility of ground-state energy estimation in the presence of depolarizing noise, offering potential advancements in error correction and algorithmic-level error mitigation for quantum algorithms.
\end{abstract}
\maketitle
\section{Introduction}

Estimating the ground-state energy of a quantum Hamiltonian is a fundamentally important task in many areas of quantum science, including quantum chemistry, condensed matter physics, and quantum complexity theory. This task has been considered a promising candidate for practical quantum advantage, in which quantum computers can help achieve significant speedup, enabling efficient computation for quantum systems that are difficult for classical computers to handle \cite{LeeLeeZhaiEtAl2022there,CaoRomeroEtAl2019quantum,BauerBravyiEtAl2020quantum,McArdleEndoEtAl2020quantum}.
On fault-tolerant quantum computers with sufficiently low logical error rate achieved through quantum error correction, of which impressive experimental progress has been made recently \cite{SivakEickbuschEtAl2022real,GoogleQuantumAI2023suppress}, quantum phase estimation (QPE) \cite{Kitaev1995,AbramsLloyd1999} based algorithm can possibly solve tasks that are difficult on classical computers \cite{BeverlandMuraliEtAl2022assessing}.
Many variants of the QPE algorithm and post-QPE algorithms have been developed for the ground-state energy estimation problem with improved runtime scaling \cite{PoulinWocjan2009,GeTuraCirac2019,LinTong2020a}. Furthermore, significant progress has been made in recent years \cite{PhysRevA.104.069901,Campbell2021EarlyFS,LinTong2022,Russo2021EvaluatingED,ZhangWangJohnson2022computing,WangSimJohnson2022state,ZengSunYuan2021universal,DongLinTong2022ground,DingLin2023,DingLin2023simultaneous,WangStilkFrancaEtAl2022quantum,LiNiYing2023low} in the development of ground-state energy estimation algorithms using very few ancilla qubits. These advancements enable a favorable trade-off between circuit depth, ancilla qubits, and total runtime, making them suitable for early fault-tolerant quantum computers while providing provable performance guarantees. These algorithms assume that we have access to a circuit that can prepare an initial state with significant overlap with the ground state. Possible choices for this initial state include the Hartree-Fock state, the matrix-product state from density-matrix renormalization group algorithm, and states obtained from variational quantum eigensolver (VQE) \cite{peruzzo2014variational,mcclean2016theory,omalley2016scalable}, and the quality of the initial state should be carefully examined in practice~\cite{LeeLeeZhaiEtAl2023}.

While the progress mentioned above can lighten the burden on error correction by reducing the size of the circuit, attempting to suppress errors through exhaustive error correction inevitably increases the amount of quantum resources needed as well as the runtime. It may therefore be reasonable to hope that quantum computers can generate sufficiently accurate quantum data, and then the impact of the quantum noise can be mitigated at the algorithmic level  on classical computers \cite{KshirsagarEtAl2022proving}. This requires analyzing the performance of quantum algorithms under specific error models, and possibly adjusting the algorithms to increase the error tolerance. For instance, a recent work along this line analyzed the performance of quantum signal processing \cite{LowChuang2017} with imprecise rotation angles in the signal operator, and designed a procedure to correct this coherent noise channel up to arbitrarily high precision \cite{TanLiuTranChuang2023error}. Another common approach for mitigating coherent noise is to employ techniques like randomized compiling \cite{WallmanEmerson2016} to transform the noise into stochastic Pauli errors. Although physical noise typically exhibits local characteristics, it is possible for quantum circuits to spread local errors and manifest them as global depolarizing noise \cite{BoixoIsakovSmelyanskiyEtAl2018,DalzellHunterJonesBrandao2021random, FoldagerKoczor2023can}. The error results in an exponential decay in the observable expectation values. Therefore, conducting thorough evaluations of quantum algorithm performance under the global depolarizing noise model can yield significant theoretical and experimental insights.
Although simple error mitigation techniques can be used to mitigate its effect, the total cost of generic error mitigation schemes may require a large number of measurements, and the total cost may scale exponentially with respect to the precision~\cite{TakagiTajimaEtAl2022universal,QuekFrancaEtAl2022exponentially}.

\subsection{Problem setup}
\label{sec:setup_of_the_problem}

Given a quantum Hamiltonian $H\in\mathbb{C}^{M\times M}$ with eigenvalue/vector pairs $\left\{(\lambda_m,\ket{\psi_m})\right\}^M_{m=1}$, our objective is to estimate the ground state energy $\lambda_0$ of $H$. For simplicity, we assume that $\lambda_0\in(-\pi,\pi]$, and the spectral gap $\Delta=\lambda_1-\lambda_0>0$, and $H$ is normalized so that $\|H\|\leq 1$. We also assume access to a sufficiently good (but not perfect) initial quantum state $\ket{\psi}$ with $p_0=\left|\innerp{\psi_0}{\psi}\right|^2>0.5$, and a family of noisy Hamiltonian evolution operator $\exp(-itH)$. Specifically, we assume that the circuit depth for implementing the controlled time evolution $U_t=\ket{0}\left\langle 0\right|\otimes I_M+\ket{1}\left\langle 1\right|\otimes \exp(-itH)$ grows linearly in $t$ (in other words, the Hamiltonian cannot be fast-forwarded), and the global depolarizing noise channel transforms $U_t$ into a quantum channel $\mathcal{M}_t$:  \begin{equation}\label{eqn:M_t}
\mathcal{M}_t(\rho)=\exp(-\alpha |t|)U_t\rho U^\dagger_t+\frac{\left(1-\exp(-\alpha |t|)\right)}{2M} I_{2M}\,.
\end{equation}
Here, $\rho\in\mathbb{C}^{2M\times 2M}$ is an arbitrary density operator of the system and ancilla register, $\alpha>0$ is the parameter that characterizes the strength of the noise, and $I_{2M}$ is the identity matrix of size $2M$. Intuitively, 
 because $\left\|\mathcal{M}_t(\rho)-\frac{1}{2M}I_{2M}\right\|_1\leq \exp(-\alpha|t|)\ll 1$ ($\|\cdot\|_1$ is the trace distance), when $t\gg \alpha^{-1}$, extracting accurate information from $U_t$ and $\rho$ using $\mathcal{M}_t(\rho)$ would be challenging when $t$ greatly exceeds $\alpha$.

This problem considered in this paper is as follows:
 
\begin{prob}\label{prob:main}
Given a gapped Hamiltonian $H$, a sufficiently good initial state $\ket{\psi}$, and a family of noisy Hamiltonian evolution channel $\mc{M}_t$ in \cref{eqn:M_t}, is it possible to estimate $\lambda_0$ to \textbf{any} precision $\epsilon>0$, and the total cost scales \textbf{polynomially} in $1/\epsilon$?
\end{prob}

To see why \cref{prob:main} is challenging, we first note that in order to estimate $\lambda_0$ to precision $\epsilon$, the maximal runtime of QPE, and many other methods such as  robust phase estimation (RPE) \cite{PhysRevA.104.069901} needs to be $\Omega(\epsilon^{-1})$. QPE is based on the quantum Fourier transform, which requires very precise implementation of the Hamiltonian evolution. This mechanism completely breaks down when $\epsilon$ is comparable to, or smaller than the noise rate $\alpha$ (for a concrete demonstration, see numerical results in \cref{sec:ins}). Furthermore, the performance of QPE cannot be improved by reducing the Monte Carlo sampling error. RPE has the salient property that in the limit of zero sampling error, the result is invariant in the presence of the global depolarizing noise channel. However, as analyzed in Appendix \ref{sec:intuitive_hardness}, when the sampling error is taken into account and $\epsilon\ll \alpha$, the number of measurements (and hence the total cost) of RPE can scale \textit{exponentially} in $1/\epsilon$. One major challenge faced by methods like QPE and RPE is their inability to leverage the spectral gap $\Delta$ to effectively decrease the maximum runtime.
%
%
%

\section{Results}

 We provide a positive answer to \cref{prob:main} by presenting an algorithm that achieves a maximal runtime of $\wt{\Theta}(\log(1/\epsilon)\Delta^{-1})$ and a total runtime of $\wt{\Theta}((1/\epsilon)^{\alpha/\Delta})$  for any $\epsilon>0$. 
This implies that the algorithm can estimate $\lambda_0$ to any $\epsilon$-accuracy, while the total runtime scales polynomially in $1/\epsilon$, given that the depolarizing noise constant $\alpha$ is not significantly greater than the gap $\Delta$. Our algorithm is based on the recently developed quantum complex exponential least squares (QCELS) method \cite{DingLin2023, DingLin2023simultaneous}. Unlike QPE or RPE, the combination of QCELS and a spectral filter proposed in~\cite{LinTong2022} can exploit the existence of a spectral gap to decrease the maximum runtime to $\wt{\Theta}(\Delta^{-1})$~\cite[Corollary 4]{DingLin2023}. However, the robustness of the spectral filtering step in the presence of the depolarizing error channel is unclear. One key theoretical advancement and algorithmic improvement in this paper is that, given a sufficiently good initial state, we may modify QCELS to directly estimate the ground-state energy to any precision under the global depolarizing noise, and the maximal runtime is only $\wt{\Theta}(\log(1/\epsilon)\Delta^{-1})=\Or(\alpha^{-1})$. This bypasses the aforementioned difficulty of QPE, RPE, and the original version of QCELS. The requirement for $T_{\max}$ to be greater than $\Delta^{-1}$ is indeed expected.  Essentially, to estimate $\lambda_0$ with high accuracy $\epsilon\ll \Delta$, it is crucial to distinguish between the signals $\exp(-i\lambda_0 t)$ and $\exp(-i\lambda_1 t)$. Therefore, it is necessary to choose a value of $t$ that is larger than $\frac{1}{\lambda_1-\lambda_0}$ to that the difference between these two signals is on the order of $\mathcal{O}(1)$.

We observe that in practical applications, $\norm{H}$ can exhibit polynomial growth with respect to the number of qubits $n$ before normalization. As a result, upon renormalization, the gap $\Delta$ scales as $\mathrm{poly}(n^{-1})$. In such situations, our theoretical findings indicate that the noise constant $\alpha$ must also decrease following a polynomial trend of $\mathrm{poly}(n^{-1})$ to ensure the ratio $\alpha/\Delta$ remains constant. This requirement is stringent and might not be easily achievable.

Our numerical results obtained under the assumption of the global depolarizing noise channel provide compelling evidence for the superior performance of our new algorithm. However, it is important to acknowledge that the noise model employed in real quantum devices is significantly more complex and extends beyond the scope of the global depolarizing error channel. In order to gain a better understanding of how our algorithm performs on real devices, we take an initial step by conducting additional simulations utilizing various realistic local channels, such as coherent noise, bit flip, phase flip, and general Pauli error channels.
These simulations serve to demonstrate the robustness of our algorithm not only against the global depolarizing channel but also under the influence of local depolarizing channels. The results obtained from these tests highlight the significance of randomized compiling in enhancing the numerical outcomes of our algorithm. This information is valuable in improving our comprehension of the algorithm's behavior in realistic scenarios and sets the stage for further investigations into its performance on real quantum hardware.


%

\subsection{Algorithm}\label{sec:main_algorithm}

\begin{figure}[htbp]
    \centering
    \includegraphics[width=\textwidth]{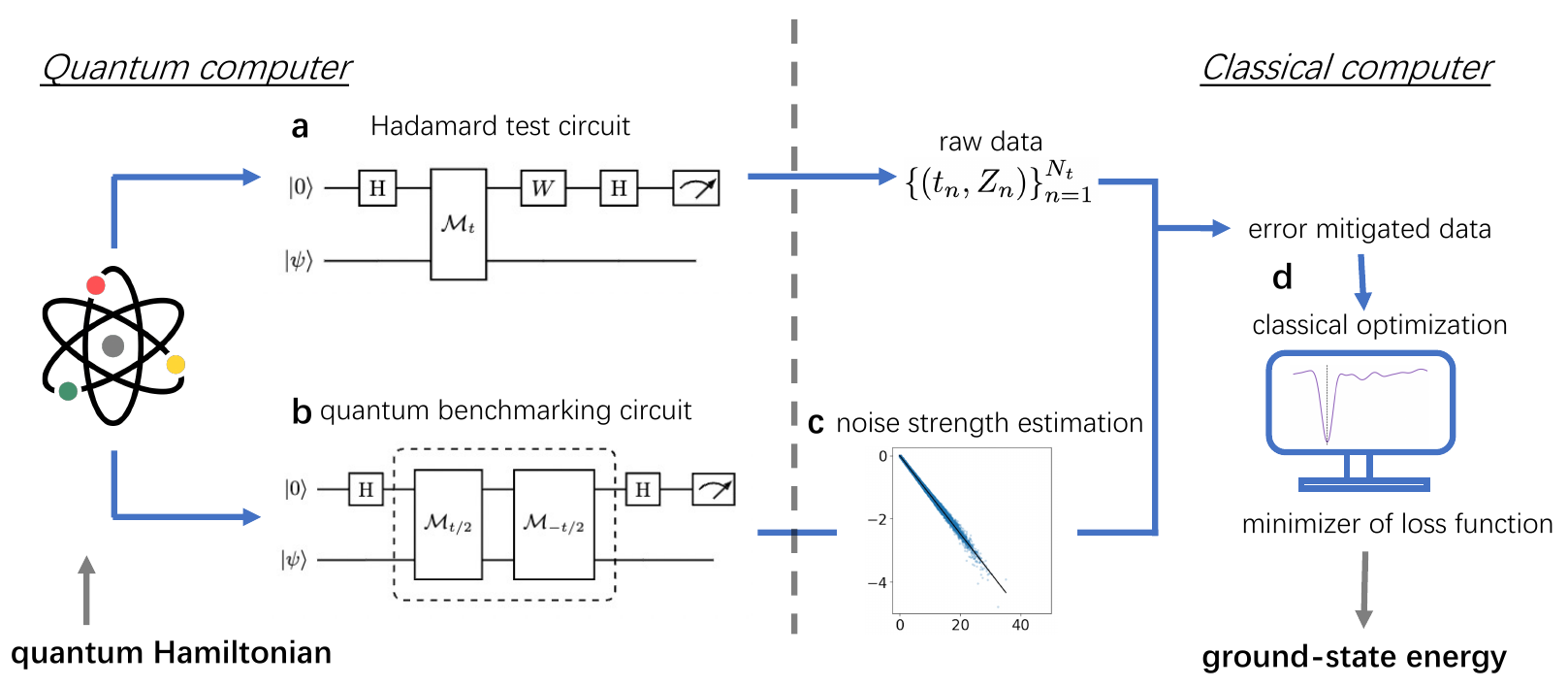}
    \caption{Flowchart of the main algorithm. \textbf{a} The quantum circuit for Hadamard test. \textbf{b} The quantum circuit for quantum benchmarking. \textbf{c} Performing regression to estimate the noise strength based on the output of quantum benchmarking circuits. \textbf{d} Classically computing the error mitigated data and estimating the ground-state energy of the Hamiltonian by classically minimizing a loss function. }
    \label{fig:flowchart}
\end{figure}

\begin{breakablealgorithm}
      \caption{Ground-state energy estimation algorithm resilient to  global depolarizing noise}
  \label{alg:main}
  \begin{algorithmic}[1]
  \Statex \textbf{Input:} Number of time samples: $N_{t}$; Number of data samples: $N_{s,1},N_{s,2}$; Parameters: $\gamma,T,a$; Times : $\{t_{B,n}\}^{N_B}_{n=1}$
    \State Generate $\widetilde{\alpha}$ using Algorithm \ref{alg:estimation_alpha} with $\{t_{B,n}\}^{N_B}_{n=1},N_{s,1}$.
    
   \State Generate a data set $\left\{(t_{n},Z_{n})\right\}^{N_t}_{n=1}$ using Algorithm \ref{alg:data} with the probability density $a(t)$ and $N_{s,2}$ data samples for each $t_n$.

   \State $(r^*,\theta^*)\gets\mathrm{argmin}_{r\in\mathbb{C},\theta\in[-\pi,\pi]}\frac{1}{N_t}\sum^{N_t}_{n=1}\left|\exp(\widetilde{\alpha} |t_n|)Z_n-r\exp(-i\theta t_n)\right|^2$

    \Statex \textbf{Output:} An estimation of $\lambda_0$: $\theta^*$
    \end{algorithmic}
\end{breakablealgorithm}

Our algorithm contains three steps:
\begin{enumerate}
 
\item Estimate the parameter noise $\alpha$ (on the quantum computer): Our method to estimate $\alpha$ is based on the following equality:
\begin{equation}\label{eqn:Bt}
\mc{B}_t(\rho):=\mathcal{M}_{-t/2}\circ\mathcal{M}_{t/2}(\rho)=\exp(-\alpha |t|)\rho+\frac{\left(1-\exp(-\alpha |t|)\right)}{2M} I_{2M}.
\end{equation}
From this observation, for each $t_{B,n}$, we can employ the Hadamard test with $\mathcal{B}_{t_{B,n}}$ (see \cref{fig:flowchart} for the circuit) to obtain a random variable $B_n$ such that $\mathbb{E}(B_n)=\exp(-\alpha|t_{B,n}|)$.  Then, we use the linear regression to fit the set of data pairs $\{(|t_{B,n}|,-\log(B_n))\}^{N_B}_{n=1}$ and define the approximation $\widetilde{\alpha}$ as the slope of the linear fitting curve.

A more detailed description of the quantum circuit and the construction algorithm are referred to Appendix \ref{sec:estimation_beta} respectively.

\item Data generation (on the quantum computer): We first draw a set of $N_t$ time points $\{t_n\}_{n=1}^{N_t}$ independently from a truncated normal distribution $a(t)$, where $a(t)$ is a truncated probability density function of the normal distribution $\mathcal{N}(0,T)$, with support in $[-\gamma T,\gamma T]$ and parameter $\gamma>0$. We then use these time points to perform the Hadamard test with quantum channel $\mathcal{M}_{t_n}$ \eqref{eqn:M_t} and $W=I$ or $S^\dagger$, where $S$ is the phase gate. We measure the ancilla qubit $N_{s,2}$ times at each time point $t_n$ to prepare a data set $\mathcal{D}_H = \{(t_n, Z_n)\}_{n=1}^{N_t}$, where $Z_n$ is a random variable satisfying $
        \mathbb{E}(Z_n) = S(t_n):=\exp(-\alpha|t_n|)\sum^{M-1}_{m=0}p_m\exp(-i \lambda_m t_n)$.

    The detailed algorithm for the data generation process is presented in Appendix \ref{sec:mqc} \cref{alg:data}.

\item Optimization (on the classical computer): 
\begin{equation}\label{eqn:op}
\left(r^*,\theta^*\right)=\mathrm{argmin}_{r\in\mathbb{R},\theta\in[-\pi,\pi]}L\left(r,\theta\right)\,,
\end{equation}
with the loss function
\begin{equation}\label{eqn:fixed_beta_new}
L\left(r,\theta\right)=\frac{1}{N_t}\sum^{N_t}_{n=1}\left|\exp\left(\widetilde{\alpha} |t_n|\right)Z_n-r\exp(-i\theta t_n)\right|^2.
\end{equation}
By minimizing this loss function $L\left(r,\theta\right)$, we can obtain an estimation $\widetilde{\lambda}_0=\theta^*$.

\end{enumerate}


A flowchart of our main algorithm can be found in \cref{fig:flowchart}.

In the noise estimation step, we note that $\mathcal{B}_t$ can be implemented by propagating the noisy controlled Hamiltonian simulation both forward and backward in time. When there is no noise present, $\mathcal{B}_t$ reduces to the identity map. We anticipate that the circuit structure for implementing $\mathcal{B}_t$ and $\mathcal{M}_t$ would be quite similar, differing primarily in their parameterization. Hence, we expect the estimated value of $\alpha$ obtained from Eq. (2) to exhibit a strong correlation with the circuit used for generating the quantum data. This procedure shares similarities with the Loschmidt echo~\cite{CucchiettiDalvitPazEtAl2003}  and the mirror benchmarking procedure  \cite{ProctorSeritanRudingerEtAl2021}.

For a given set of time steps $\{t_{B,n}\}^{N_B}_{n=1}$, we construct Hadamard test circuits that correspond to the quantum channels $\mathcal{B}_{t_{B,n}}$. By performing measurements on the ancilla qubit of each circuit $N_{s,1}$ times, we acquire a set of variables $\{B_{n}\}^{N_B}_{n=1}$, where each $B_{n}$ satisfies $\mathbb{E}(B_{n})=\exp(-\alpha |t_{B,n}|)$ and $|B_{n}-\exp(-\alpha |t_{B,n}|)|=\mathcal{O}(1/\sqrt{N_{s,1}})$ with high probability. Although our algorithm does not explicitly depend on knowledge of the spectral gap $\Delta$, it can be beneficial to compare $\widetilde{\alpha}$ with an approximate value of the gap. If the strength of the noise significantly surpasses the magnitude of the gap, we might need to substantially increase the number of samples to achieve accurate estimates of the ground-state energy.

Finally, in contrast to the original QCELS method described in \cite{DingLin2023}, our algorithm adopts a different strategy in the second step. Instead of using uniformly distributed time points, we sample random time points from a Gaussian distribution. This choice of a Gaussian distribution plays a vital role in the success of our algorithm. It is worth highlighting that the Fourier transform of a Gaussian distribution is also a Gaussian function, which exhibits a high concentration around the origin. This property makes it an effective filter function for isolating $\lambda_0$ and filtering out other eigenvalues. The detailed explanation and intuitive derivation can be found in \cref{sec:complexity_algorithm}.

\subsection{Complexity analysis}\label{sec:complexity_result}

In the optimization step, when both $N_{s,2}$ and $N_t$ are sufficiently large, the loss function $L\left(r,\theta\right)$ closely approximates the ``ideal'' loss function:
\begin{equation}\label{eqn:loss_multi_modal_perfect}
\mathcal{L}\left(r,\theta\right)=\int^\infty_{-\infty}a(t)\left|\exp\left((\widetilde{\alpha}-\alpha) |t|\right)\sum^{M-1}_{m=0}p_m\exp(-i \lambda_m t)-r\exp(-i\theta t)\right|^2\rd t,,
\end{equation}
and \eqref{eqn:op} is equivalent to
\begin{equation}\label{eqn:op_perfect}
\left(r^*,\theta^*\right)=\mathrm{argmin}_{r\in\mathbb{C},\theta\in[-\pi,\pi]}\mathcal{L}\left(r,\theta\right)\,.
\end{equation}
Intuitively, when $p_0>0.5$, $\widetilde{\alpha}\approx \alpha$, to achieve the minimum, we should have $\theta^*\approx \lambda_0$ so that $r^*\exp(-i\theta^* t)$ can cancel the dominant term $p_0\exp(-i\lambda_0 t)$ in the original signal. The QCELS method \cite{DingLin2023} is based on the same intuition, but the original proof requires the maximal simulation time to be $\Or(\epsilon^{-1})$ which is impractical in the noisy setting. Furthermore, in Appendix \ref{sec:analysis_QCELS}, we provide a simple example to illustrate that, under the global depolarizing noise, the error of QCELS has a positive lower bound for approximating the ground state energy no matter how large $T_{\max}$ is. In contrast, our analysis in \cref{thm_alg_new_informal} shows that Algorithm \ref{alg:main} can significantly reduce the requirement of the maximal runtime and the optimization problem with the new loss function in \cref{eqn:fixed_beta_new} can indeed estimate $\lambda_0$ to arbitrarily high precision in polynomial time. 

Our informal complexity result can be summarized as follows: 
\begin{thm}\label{thm_alg_new_informal}
Assume $p_0>0.5$. Given any $0<\epsilon<1/2$, there exists a sequence of time points $\{t_{B,n}\}^{N_B}_{n=1}$ such that $t_{B,n}=\mathcal{O}(1/\alpha)$ and we can choose $N_{s,1}=\widetilde{\Theta}(1/\epsilon^2)$ 
to obtain 
$|\widetilde{\alpha}-\alpha|\leq \epsilon$ with high probability. 

Next, we can choose $T=\Theta\left(\log^{1/2}\left(1/\epsilon\right)/\Delta\right)$ and $\gamma= \Theta\left(\log^{1/2}(1/\epsilon)\right)$ with $N_{s,2}=\Theta(1)$ and $N_t=\widetilde{\Theta}\left(\left(1/\epsilon\right)^{2+\Theta\left(\alpha/\Delta\right)}\right)$to guarantee that with high probability, $|\widetilde{\lambda}_0-\lambda_0|\leq \epsilon$. 

In summary, the maximal runtime of our algorithm is $T_{\max}=\Theta\left(\log\left(1/\epsilon\right)/\Delta\right)$ 
and the total runtime is
$
T_{\mathrm{total}}=\widetilde{\Theta}\left(\left(1/\epsilon\right)^{2+\Theta\left(\alpha/\Delta\right)}\right)
$.
\end{thm}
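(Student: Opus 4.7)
The plan is to split the argument into the three estimates that make up the theorem and then combine them. The proof decomposes naturally into (i) accuracy of $\widetilde\alpha$, (ii) convergence of the empirical loss $L(r,\theta)$ to the ideal loss $\mathcal{L}(r,\theta)$, and (iii) analysis of the minimizer of $\mathcal{L}$ as a function of $\widetilde\alpha-\alpha$ and of the filter parameters $T,\gamma$.

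First I would handle the noise-estimation step. Since $\mathcal{B}_t(\rho)=e^{-\alpha|t|}\rho+(1-e^{-\alpha|t|})\,I_{2M}/(2M)$, a Hadamard test on $\mathcal{B}_{t_{B,n}}$ produces a $\{\pm 1\}$-valued random variable $B_n$ with $\mathbb{E}[B_n]=e^{-\alpha|t_{B,n}|}$. Choosing $t_{B,n}=\Theta(1/\alpha)$ so that $\mathbb{E}[B_n]$ is bounded away from $0$, Hoeffding with $N_{s,1}=\widetilde\Theta(1/\epsilon^2)$ shots gives $|B_n-e^{-\alpha|t_{B,n}|}|\le \epsilon/C$ with high probability, and hence $|-\log B_n-\alpha|t_{B,n}||\le \epsilon'$ uniformly. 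Ordinary least squares on $\{(|t_{B,n}|,-\log B_n)\}$ then yields $|\widetilde\alpha-\alpha|\le\epsilon$ by standard regression error propagation (the slope error is proportional to the $y$-error divided by the spread in $x$).

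Next I would analyze the ideal loss $\mathcal{L}(r,\theta)$. Expand the square in \eqref{eqn:loss_multi_modal_perfect}: the minimum in $r$ for fixed $\theta$ is $r^*(\theta)=\int a(t)\,g(t)\,e^{i\theta t}\rd t$, where $g(t)=e^{(\widetilde\alpha-\alpha)|t|}\sum_m p_m e^{-i\lambda_m t}$, so
\begin{equation*}
\mathcal{L}(r^*(\theta),\theta)=\int a(t)|g(t)|^2\rd t-\bigl|\widehat{(a\,g)}(\theta)\bigr|^2.
\end{equation*}
Minimizing $\mathcal{L}$ in $\theta$ is therefore equivalent to \emph{maximizing} $|\widehat{(ag)}(\theta)|$. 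Because $a$ is a truncated Gaussian with width $T$ and $|e^{(\widetilde\alpha-\alpha)|t|}-1|\le C|\widetilde\alpha-\alpha|\gamma T$ on the support, the integrand is essentially $\sum_m p_m a(t) e^{-i\lambda_m t}$, whose Fourier transform is a sum of Gaussians of width $\sim 1/T$ centered at $\theta=\lambda_m$. Choosing $T=\Theta(\log^{1/2}(1/\epsilon)/\Delta)$ makes the off-center Gaussians at $\lambda_m\ (m\ge 1)$ smaller than the central one at $\lambda_0$ by a factor $e^{-\Delta^2 T^2/2}=\mathrm{poly}(\epsilon)$, while the truncation at $|t|\le\gamma T$ with $\gamma=\Theta(\log^{1/2}(1/\epsilon))$ contributes only $\mathrm{poly}(\epsilon)$ error. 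Combining with $p_0>1/2$, a local strong-convexity argument around $\lambda_0$ (the second derivative of $|\widehat{(ag)}|^2$ at the maximum is of order $p_0^2 T^2$) converts the function-value gap into $|\theta^*-\lambda_0|\le\epsilon$.

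Finally I would transfer this to the empirical loss. Writing $Y_n:=e^{\widetilde\alpha|t_n|}Z_n$, we have $|Y_n|\le e^{\widetilde\alpha\gamma T}\le(1/\epsilon)^{\Theta(\alpha/\Delta)}$ and $\mathbb{E}[Y_n\mid t_n]=e^{(\widetilde\alpha-\alpha)|t_n|}\sum_m p_m e^{-i\lambda_m t_n}$. Bernstein's inequality, applied uniformly over a sufficiently fine net in $(r,\theta)$ and then closed up by a Lipschitz argument, gives
\begin{equation*}
\sup_{r,\theta}|L(r,\theta)-\mathcal{L}(r,\theta)|=\widetilde{O}\!\left(\frac{(1/\epsilon)^{\Theta(\alpha/\Delta)}}{\sqrt{N_t N_{s,2}}}\right)
\end{equation*}
with high probability. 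Setting $N_{s,2}=\Theta(1)$ and $N_t=\widetilde\Theta((1/\epsilon)^{2+\Theta(\alpha/\Delta)})$ makes this $o(\epsilon^2)$, which together with the strong-convexity-at-the-minimum argument above yields $|\widetilde\lambda_0-\lambda_0|\le\epsilon$. The maximal runtime is $\gamma T=\Theta(\log(1/\epsilon)/\Delta)$ and multiplying by $N_t N_{s,2}$ gives the stated total runtime.

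The main obstacle I anticipate is the third step: the error-mitigation prefactor $e^{\widetilde\alpha|t|}$ blows the variance of $Y_n$ up by a factor of $(1/\epsilon)^{\Theta(\alpha/\Delta)}$ precisely on the support of $a$, and this variance bound must be matched exactly against the spectral-filter width needed in the second step. Making the two exponents line up to produce the clean polynomial scaling $(1/\epsilon)^{2+\Theta(\alpha/\Delta)}$, while simultaneously handling the perturbation $\widetilde\alpha-\alpha$ inside the exponent without losing an extra $\gamma T$ factor, is the delicate balancing act on which the whole complexity depends.
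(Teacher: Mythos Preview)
Your outline is sound through steps (i) and (ii), and the reduction to maximizing $|\widehat{(ag)}(\theta)|$ together with the Gaussian-filter separation argument matches the paper's strategy closely. The gap is in step (iii), specifically in the arithmetic that is supposed to produce the exponent $2+\Theta(\alpha/\Delta)$ in $N_t$.

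Your uniform concentration bound reads
\[
\sup_{r,\theta}|L-\mathcal L|=\widetilde O\!\left(\frac{e^{\widetilde\alpha\gamma T}}{\sqrt{N_t}}\right)
=\widetilde O\!\left(\frac{(1/\epsilon)^{\Theta(\alpha/\Delta)}}{\sqrt{N_t}}\right),
\]
and the curvature of the objective at the maximum is of order $T^2$, so you need this deviation to be $O(T^2\epsilon^2)=\widetilde O(\epsilon^2)$ to force $|\theta^*-\lambda_0|\le\epsilon$. Solving for $N_t$ gives
\[
N_t=\widetilde\Omega\!\left(\frac{e^{2\alpha\gamma T}}{T^4\epsilon^4}\right)
=\widetilde\Omega\!\left((1/\epsilon)^{\,4+\Theta(\alpha/\Delta)}\right),
\]
not $(1/\epsilon)^{2+\Theta(\alpha/\Delta)}$. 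Your sentence ``Setting $N_t=\widetilde\Theta((1/\epsilon)^{2+\Theta(\alpha/\Delta)})$ makes this $o(\epsilon^2)$'' is false: with that $N_t$ the deviation is only $\widetilde O(\epsilon)$, which after the square-root in the curvature step yields an $O(\sqrt{\epsilon})$ error in $\theta^*$, not $O(\epsilon)$. This is exactly the ``weak'' bound the paper records as Proposition~\ref{prop_new_loss}.

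To close the gap from exponent $4$ to $2+o$ the paper uses a bootstrapping iteration (Lemmas~\ref{lem_new_loss_improve} and~\ref{lem_new_loss_iteration}). The idea is that once a crude bound $|\theta^*-\lambda_0|\le\epsilon_0$ is in hand, one does not need the full uniform bound on the error term $E(\theta)$: it suffices to control the \emph{difference} $|E(\theta^*)-E(\lambda_0)|$, and this difference picks up an extra Lipschitz factor $\epsilon_0\gamma T$ on top of the concentration scale (cf.\ the second inequality in \eqref{eqn:bound_E1}). Feeding this back into the curvature argument improves $\epsilon_0$ to $\epsilon_1$ with $\epsilon_1^2/\epsilon_0$ fixed, and iterating along the sequence $\epsilon_k=\epsilon^{(2-2^{-k})/(2-2^{-K})}$ drives the exponent of $N_t$ down to $2+o$. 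Your final paragraph correctly senses that ``making the two exponents line up'' is the crux, but this is not a bookkeeping issue; it requires the localization/iteration step, which is absent from your plan.
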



In \cref{sec:complexity_algorithm}, we discuss the detailed theorem and proof, which mainly consists of three parts. First, the complexity of estimating $\alpha$ can be derived directly from the linear regression formula. We note that the requirement $\widetilde{\alpha}\approx \alpha$ is important to get rid of the effect of global depolarizing noise in the loss function. Next, when we have enough data points and $\widetilde{\alpha}\approx \alpha$, solving \eqref{eqn:op} is approximately equivalent to maximizing $L(\theta)=\left|\int^\infty_{-\infty}a(t)\sum^{M-1}_{m=0}p_m\exp(i (\lambda_m-\theta) t)\rd t\right|^2$. Because $a(t)$ is an approximated Gaussian distribution, its Fourier transform also has the Gaussian form and concentrates around $0$. Thus, we can expect that $L(\theta)$ concentrates around every eigenvalue $\lambda_k$ and achieves its maximum value around $\lambda_0$ since $p_0>0.5$. Due to the concentration property of $F(\theta)$, it suffices to choose a maximal running time $T_{\max}=\Theta\left(\log\left(1/\epsilon\right)/\Delta\right)$ to eliminate the effect of $\lambda_1,\dots,\lambda_{M-1}$ in the optimization problem. Finally, since the maximal running time has a logarithmic dependence on $1/\epsilon$, the variance of $Z_n$ for each $t_n$ is bounded by $\exp(\alpha T_{\max})=\mathrm{poly}(1/\epsilon)$. This suggests that the necessary number of samples to accurately approximate the \emph{ideal} optimization problem ~\eqref{eqn:op_perfect} is of the order $\mathrm{poly}(1/\epsilon)$. As a result, the total running time scales polynomially in $1/\epsilon$.

\subsection{Numerical results: global depolarizing noise model}\label{sec:ins}


The ground-state energy estimation problem considered in the numerical experiments is that of the transverse-field Ising model (TFIM) whose Hamiltonian is defined as
\begin{equation}\label{eqn:H_Ising}
H=-\sum^{L-1}_{i=1} Z_{i}Z_{i+1} -g\sum^L_{i=1} X_i
\end{equation}
where $L$ is the number of qubits and $g$ is referred to as the coupling coefficient. We consider the case that $L = 4$ and $g = 1$. The initial state 
is prepared to $\ket{\psi} = \mathrm{H}^{\otimes L} \ket{0^L} = \ket{+^L}$, where $\mathrm{H}$ is the Hadamard gate and the state has overlap $p_0 \approx 0.8134$. In the numerical simulation, we normalize and shift the spectrum of the Hamiltonian $\widetilde{H}=H/\|H\|$ so that its eigenvalues lie in the interval $[-1, 1]$ which is independent of its scale. To demonstrate the performance with variable noise strength, the numerical experiments are performed with the noise strength $\alpha = 0.125, 0.25$.

We compare the performance of Algorithm \ref{alg:main}, QPE, RPE, and QCELS with global depolarizing noise. We derive the output distribution of QPE under global depolarizing noise and conduct experiments to subsequently sample from this newly obtained distribution. To fit the global depolarizing noise, the implementation of QCELS in our experiment is slight different from \cite{DingLin2023}. The version of QCELS used for comparison here can be seen as a special case of \cref{eqn:fixed_beta_new}, where we set $\wt{\alpha}=0$ and find the optimal $\theta$ in the entire complex plane  (see detail in Appendix \ref{sec:implemention_details_ideal}). We choose the parameters properly for RPE, QCELS, and Algorithm \ref{alg:main} to make sure the total running times of different methods are comparable. In addition, we simulate four methods ten times with random initial states $\{\ket{\psi^r_m}\}^{10}_{m=1}$ and measure the averaged estimation error defined as $|\widetilde{\lambda}_0-\lambda_0|$. Here, $\ket{\psi^r_m}$ is generated by randomly reordering the overlaps between the primary initial state $\ket{\psi}=\ket{+^L}$ and other eigenvalues. Specifically, $\ket{\psi^r_m}=\sum^2_{j=0}p_j\ket{\psi_j}+\sum^{2^L-1}_{j=3}p_{r_{j,m}}\ket{\psi_j}$, where $\{p_{r_{j,m}}\}^{2^L-1}_{j=3}$ is a random reordering set of $\{p_j\}^{2^L-1}_{j=3}$. The detailed implementation of different methods and the choices of parameters can be found in Appendix \ref{sec:implemention_details_ideal}.

The results shown in Figure \ref{fig:TFIM_loss_classical} indicate that QPE with depolarizing noise fails to accurately estimate $\lambda_0$, as the error does not decrease with increasing $T_{\max}$. The errors for RPE and QCELS decay linearly in $1/T_{\max}$ when $T_{\max}$ increases.  In contrast, Algorithm \ref{alg:main} exhibits a faster rate of error reduction with increasing $T_{\max}$ compared to RPE and QCELS, which is consistent with the theoretical result presented in Theorem \ref{thm_alg_new_informal}. We observe that the performance of Algorithm \ref{alg:main} surpasses our theoretical result stated in Theorem \ref{thm_alg_new_informal}. Specifically, the spectral gap in this case is approximately $\Delta \approx 0.25$. According to Theorem \ref{thm_alg_new_informal}, in order to achieve accuracy $\epsilon$, a choice of $N_t=\widetilde{\Theta}\left(\left(1/\epsilon\right)^{2+\Theta\left(\alpha/\Delta\right)}\right)\geq 10^6$ is required when $\epsilon\approx 10^{-3}$ and $\alpha=0.25$. However, in our experiment, we found that $N_t=10^4$ already yields an error of order almost $10^{-3}$. 


\begin{figure}[htbp]
     \subfloat{
         \centering
         \includegraphics[width=0.48\textwidth,height=0.3\textheight]{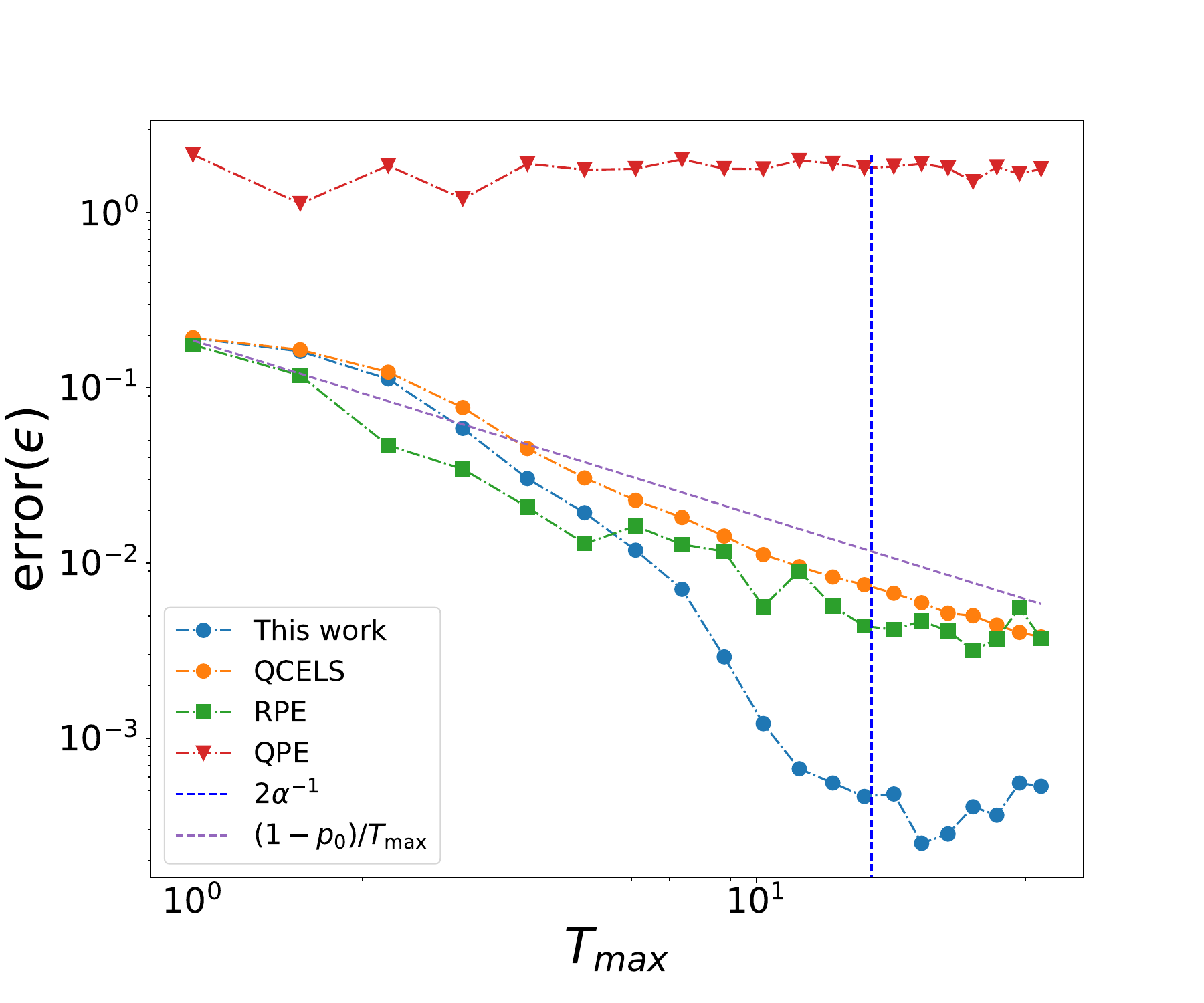}
     }
     \subfloat{
         \centering
         \includegraphics[width=0.48\textwidth,height=0.3\textheight]{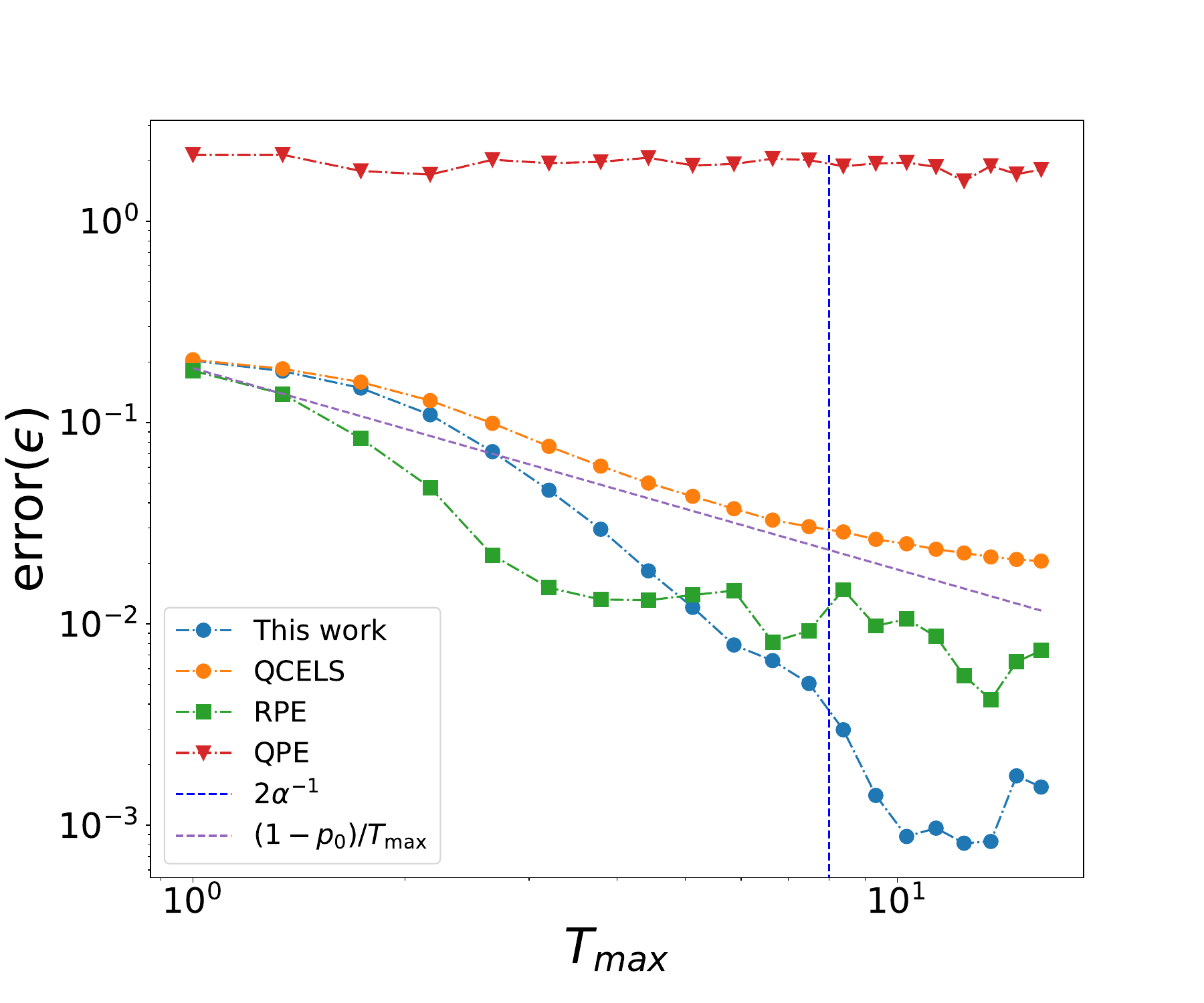}
     }
     \caption{
     \label{fig:TFIM_loss_classical} Comparison of Algorithm \ref{alg:main}, QPE, and RPE for the TFIM model with 4 sites and an initial overlap of $p_0=0.8$. Left: Noise parameter $\alpha=0.125$; Right: Noise parameter $\alpha=0.25$. 
     Our observations indicate that QPE fails to provide an accurate estimation for $\lambda_0$ in the presence of depolarizing noise, whereas Algorithm \ref{alg:main} performs much better than other methods. }
\end{figure}

\subsection{Numerical results: realistic local noise model}\label{sec:numerical_robust}

Although our algorithm is designed under the assumption of global depolarizing noise, we provide numerical evidence to demonstrate the robustness of our algorithm.  We perform noisy quantum experiments using the \textsf{IBM Qiskit} QASM simulator. The numerical tests involve multiple types of quantum noises, aligned with the noise strength parameter $\alpha$. The details of the simulation is given in \cref{sec:implemention_details_robust}.

The results in \cref{fig:tfim_regression} demonstrate the impact of increasing $T_\mathrm{max}$ on the accuracy of ground-state energy estimation. Although the locally applied quantum noises channels deviate from the ideal global noise ansatz, \cref{alg:main} still accurately estimates the ground-state energy. Incorporating quantum benchmarking circuits for error mitigation enhances the accuracy of energy estimation, particularly for small circuit depths.
However, the performance of \cref{alg:main} varies significantly
for deeper circuits with larger $T_\mathrm{max}$ under different noise models. \cref{fig:tfim_regression} shows that error mitigation through quantum benchmarking circuits significantly improves the accuracy of simulations with local depolarizing noise. The effect of employing the quantum benchmark circuit is clear even beyond $T_{\max}=2/\alpha$.
However, the use of the quantum benchmarking circuit is less effective beyond $T_{\max}=1/\alpha$ for other noise models, since these models deviate further from the global depolarizing assumption.

Among the various noise channels, coherent unitary noise stands out as a particularly channeling case. In \cref{sec:appendix_additional_numerical_results}, we delve into this behavior and demonstrate that coherent noise alters the effective Hamiltonian simulated in the quantum circuit. This discrepancy between coherent noise channels and Pauli channels underscores the significance of randomized compilation procedure. By employing Pauli twirling \cite{KnillLeibfriedReichleEtAl2008,WallmanEmerson2016}, an arbitrary quantum noise channel can be mapped to a local Pauli noise channel. Furthermore, incorporating twirling with the Clifford group enables further reduction of quantum noise to local depolarizing noise~\cite{MagesanGambettaEmerson2011}. Our findings indicate that leveraging this hierarchical noise reduction through twirling can yield significantly improved accuracy in energy estimation.

\begin{figure}[htbp]
    \centering
    \includegraphics[width = \textwidth]{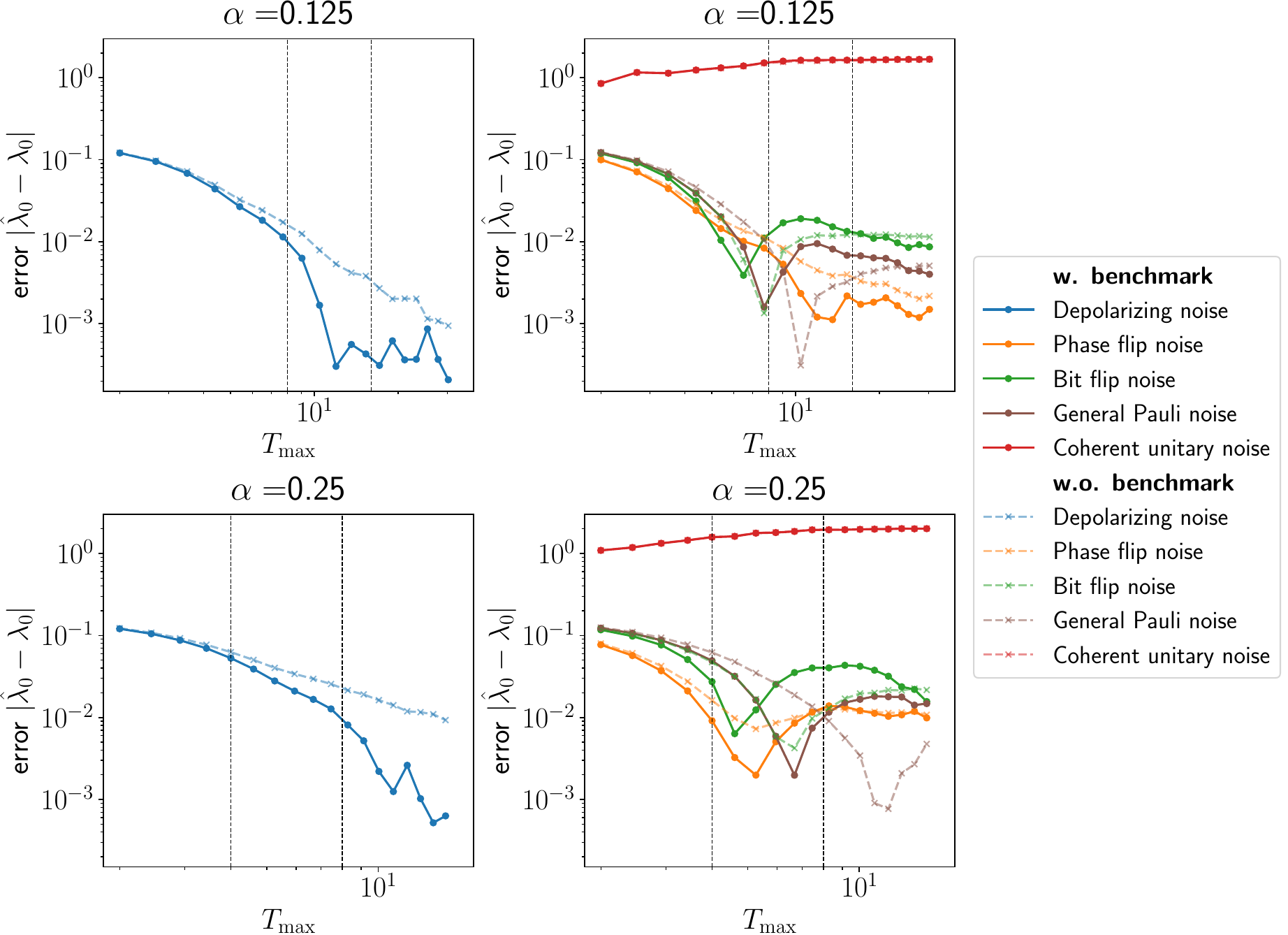}
    \caption{Estimation error of the ground-state energy. The vertical dashed line stands for $T_\mathrm{max} = 1 / \alpha, 2 / \alpha$. Each data point is an average of three independent repetitions.}
    \label{fig:tfim_regression}
\end{figure}

\section{Conclusion}

This paper introduces a ground-state energy estimation algorithm designed to withstand the challenges posed by the global depolarizing noise channel. The algorithm demonstrates robustness while maintaining a polynomial cost in $1/\epsilon$. To this end, it is essential for the quantum algorithm to estimate the ground-state energy within a maximum runtime of $\Or(\log \epsilon^{-1})$ in the noiseless setting. Failure to do so would result in the signal being overwhelmed by noise, necessitating an exponentially increasing number of samples for signal recovery. Existing algorithms like QPE and RPE do not meet this requirement. Through proper modifications to the QCELS algorithm, we have successfully achieved a maximal runtime that meets the aforementioned stringent requirement. 
Numerous numerical results demonstrate the efficiency of our algorithm in addressing not only global depolarizing noise but also other types of quantum noise such as local depolarizing noise.

Finally, we note that this work opens the door to several intriguing problems and unexplored directions:
\begin{itemize}
    \item Although Algorithm \ref{alg:main} can achieve $\epsilon$ accuracy with a total running time that scales polynomially in $\epsilon$, the exponent of this polynomial depends on the ratio $\alpha/\Delta$. To maintain this exponent as a constant, it is necessary for the depolarizing constant $\alpha$ to be on the order of $\mathcal{O}(\Delta)$. In scenarios where the spectral gap of $H$ decreases with the system size, a decreasing noise constant $\alpha$ is also required for accurate estimation of $\lambda_0$. An intriguing avenue for future investigation is whether substantial improvement is possible to reduce the dependence on $\alpha/\Delta$.

    \item  In Theorem \ref{thm_alg_new_informal}, the assumption $p_0 > 0.5$ is made to ensure that the solution of the optimization problem~\eqref{eqn:fixed_beta_new} closely approximates $\lambda_0$. If this condition is not met, similar to the approach outlined in~\cite{DingLin2023,WangStilkFrancaEtAl2022quantum}, a rough estimation of the ground state energy is required in the beginning. Specifically, we can employ the Fourier filtering technique introduced in \cite{LinTong2022} on the sequence $\left\{(t_n, \exp\left(\widetilde{\alpha} |t_n|\right)Z_n)\right\}$ to find a rough estimation $\widetilde{\lambda}_0$ such that $|\lambda_0-\widetilde{\lambda}_0|\leq \frac{\Delta}{2}$ with $T_{\max}=\Theta(1/\Delta)$. Subsequently, we can apply \cref{alg:main}, adjusting the optimization problem to be solved with $\theta\in[\widetilde{\lambda}_0-\Delta/2, \widetilde{\lambda}_0+\Delta/2]$. The resulting value $\theta^*$ obtained from this modified application is still an accurate approximation to $\lambda_0$. 
        
    \item Although the majority of prior ground state energy estimation algorithms necessitate $T_{\max}$ to scale as $\mathrm{poly}(1/\epsilon)$, there have been recent demonstrations of alternative algorithms with a maximal runtime scaling as $\wt{\Or}(\Delta^{-1})$ \cite{DingLin2023,WangStilkFrancaEtAl2022quantum}. While these algorithms can be more complex than our current scheme, we anticipate that certain adaptations can be made to enhance their resilience to the global depolarizing noise.
    \item It is important to note that real quantum computers encounter additional complex sources of noise, such as readout errors, and compiler optimization effects. In addition, the noise channel may even exhibit time heterogeneity. Quantum error correction may be ultimately necessary to address these challenges. While the ability to correct arbitrary errors is a theoretical triumph, early fault-tolerant quantum computers may exhibit structured errors. Exploiting the structure of noise becomes crucial in this context, and it may be important to combine error correction and algorithmic-level error mitigation to develop efficient quantum algorithms. For example, although our algorithm was initially developed
assuming global depolarizing noise, numerical simulations have demonstrated its effectiveness and
resilience in handling various other noise models, particularly local depolarizing noise. This observation motivates further research into improved theoretical understanding of ground-state energy estimation algorithms under local depolarizing noise, as well as efficient 
techniques for tailoring noise into local depolarizing noise for improving the efficiency of quantum algorithms on early fault-tolerant quantum computers.
\end{itemize}

\noindent{\large \textbf{Data availability}}\\
The experimental data that support the finding are available from the authors upon request.\\

\noindent{\large \textbf{Code availability}}\\
The codes that support the finding are available from the authors upon request.\\

\noindent {\large \textbf{Acknowledgments}}\\

This material is based upon work supported by the U.S. Department of Energy, Office of Science, National Quantum Information Science Research Centers, Quantum Systems Accelerator (Z.D.). Additional support is acknowledged from NSF Quantum Leap Challenge Institute (QLCI) program under Grant number OMA-2016245 (Y.D.), 
the Applied Mathematics Program of the US Department of Energy (DOE) Office of Advanced Scientific Computing Research under contract number DE-AC02-05CH1123, and the Google Quantum Research Award.  Y.T. acknowledges funding from U.S. Department of Energy Office of Science, Office of Advanced Scientific Computing Research (DE-SC0020290), and from U.S. Department of Energy, Office of Science, Basic Energy Sciences, under Award Number DE-SC0019374. Work supported by DE-SC0020290 is supported by the DOE QuantISED program through the theory consortium ``Intersections of QIS and Theoretical Particle Physics'' at Fermilab. The Institute for Quantum Information and Matter is an NSF Physics Frontiers Center. L.L. is a Simons Investigator in Mathematics.\\

\noindent {\large \textbf{Author contributions}}\\

 L.L. and Y.T. conceived the original study. Z.D. and L.L. carried out the theoretical analysis to support the study. Y.D. and Z.D. carried out the numerical simulation to support the study. All authors discussed the results of the manuscript, and contributed to the writing of the manuscript.\\

\noindent{\large \textbf{Competing interests}}\\
\noindent The authors declare no competing interests.

\bibliographystyle{abbrv}
\bibliography{ref}
\newpage

\appendix

\setcounter{equation}{0}
\setcounter{figure}{0}
\setcounter{table}{0}
\renewcommand{\figurename}{Supplementary Figure}
\renewcommand{\tablename}{Supplementary Table}
\renewcommand{\thefigure}{S\arabic{figure}}
\renewcommand{\thetable}{S\arabic{table}}

\begin{center}
    {\bf \Large Supplementary Information}
\end{center}

\renewcommand{\thesubsection}{\thesection.\arabic{subsection}}

\section{Detailed algorithm}

Throughout the paper, we use the following asymptotic notations besides the usual $\Or$ notation: we write $f=\Omega(g)$ if $g=\Or(f)$; $f=\Theta(g)$ if $f=\Or(g)$ and $g=\Or(f)$; $f=\wt{\Or}(g)$ if $f=\Or(g\operatorname{polylog}(g))$.
This section outlines the process of constructing $\widetilde{\alpha}$ and provides a detailed introduction to the data generator and the optimization problem.

\subsection{Estimation of \texorpdfstring{$\alpha$}{Lg}}\label{sec:estimation_beta}
To construct a proper loss function~\eqref{eqn:fixed_beta_new}, obtaining a $\widetilde{\alpha}$ well approximating the noise parameter $\alpha$ is an essential step. This problem has been tackled in previous work, such as in \cite{PhysRevLett.127.270502}, where the authors proposed a noise-estimation circuit to estimate $\alpha$ and use it to correct the output of the target circuit. 

Our approach corrects the noisy data using $\exp\left(\widetilde{\alpha}|t|\right)$ estimated from an auxiliary circuit whose structure is similar to the working Hadamard test circuits but whose justification does not need hard classical computation. Based on the observation in \eqref{eqn:Bt}, for each $t_{B,n}$, we couple $\mathcal{B}_{t_{B,n}}$ into the Hadamard test and average the outcome of the measurements of the ancilla qubits to obtain an approximation to $\exp(-\alpha|t_{B,n}|)$. Specifically, we repeat the quantum circuit in \cref{fig:flowchart} $N_{s,1}$ times with $t=t_B$ and define
\[
b_{n,m}=\left\{
\begin{aligned}
&1,\quad \text{outcome}=0\\
&-1,\quad \text{outcome}=1
\end{aligned}\right.\,,\quad B_{n}=\frac{1}{N_{s,1}}\sum^{N_{s,1}}_{m=1}b_{n,m}\,.
\]
The quantum circuit in \cref{fig:flowchart} is referred to as quantum benchmarking circuit. The detailed algorithm is summarized in \cref{alg:estimation_alpha}.
\begin{breakablealgorithm}
      \caption{Estimation of $\alpha$}
  \label{alg:estimation_alpha}
  \begin{algorithmic}[1]
  \State \textbf{Preparation:} Times : $\{t_{B,n}\}^{N_B}_{n=1}$, Number of data samples: $N_{s,1}$; 
  \State Run the benchmarking circuits in \cref{fig:flowchart} $N_{s,1}$ times with $\{t_{B,n}\}^{N_B}_{n=1}$ to obtain $\{B_{n}\}^{N_B}_{n=1}$.
    \State $
\left(\widetilde{\alpha},\widetilde{\beta}\right)\gets\mathrm{argmin}_{(a,b)}\sum^{N_B}_{n=1}\left|\log\left(B_{n}\right)+a|t_{B,n}|+b\right|^2$\Comment{Linear regression}
\State  \textbf{Output:} $\widetilde{\alpha}$
    \end{algorithmic}
\end{breakablealgorithm}

Finally, noticing that the random variable $B_{n}$ only takes the value on $1$ and $-1$, we have the following lemma that proves the first part of Theorem \ref{thm_alg_new_informal}:
\begin{lem}\label{lem:beta} Given any $0<\epsilon,\eta<1$, we can choose $N_B=\mathcal{O}(1)$, $|t_{B,n}|=\mathcal{O}(\alpha^{-1})$. Define \[\kappa=\left\|\begin{pmatrix}
N_B & \sum^{N_B}_{n=1}|t_{B,n}|\\
 \sum^{N_B}_{n=1}|t_{B,n}| &  \sum^{N_B}_{n=1}t^2_{B,n}
\end{pmatrix}^{-1}\right\|_2\,.\] Then, we can further set
$N_{s,1}=\Omega(\kappa\log(N_B/\eta)N^2_B/\epsilon^2)$ to obtain $|\widetilde{\alpha}-\alpha|\leq \epsilon$ with probability $1-\eta$.
\end{lem}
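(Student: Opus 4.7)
The plan rests on three standard steps: Hoeffding's inequality to concentrate each empirical average $B_n$ around its mean, Lipschitz continuity of $\log$ to transfer this concentration to $-\log(B_n)$, and the closed-form OLS estimator to propagate the per-sample errors into $\widetilde{\alpha}-\alpha$. The quantity $\kappa$ plays exactly the role of the conditioning of the regression.

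First I would fix the grid so that $|t_{B,n}|\leq 1/\alpha$, which forces $\mathbb{E}(B_n)=\exp(-\alpha|t_{B,n}|)\in[e^{-1},1]$; a constant number $N_B=\mathcal{O}(1)$ of well-separated time points keeps the $2\times 2$ Gram matrix nondegenerate, so $\kappa<\infty$. Because each $b_{n,m}\in\{-1,1\}$ is bounded, Hoeffding's inequality combined with a union bound over $n$ gives
\begin{equation*}
\Pr\!\left(\max_{1\leq n\leq N_B}|B_n-\mathbb{E}(B_n)|\leq \delta\right)\geq 1-\eta
\end{equation*}
as soon as $N_{s,1}=\Omega(\log(N_B/\eta)/\delta^2)$. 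On this event, once $\delta$ is smaller than a fixed fraction of $e^{-1}$, every $B_n$ lies in a compact subinterval of $(0,1]$ on which $\log$ is Lipschitz, so
\begin{equation*}
\xi_n:=-\log(B_n)-\alpha|t_{B,n}|=-\log\!\left(B_n/\mathbb{E}(B_n)\right)=\mathcal{O}(\delta)
\end{equation*}
uniformly in $n$, recasting the regression as a noisy linear model with $\|\xi\|_\infty=\mathcal{O}(\delta)$.

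Second, writing $X\in\mathbb{R}^{N_B\times 2}$ with rows $(|t_{B,n}|,1)$, the OLS normal equations give $\widetilde{\alpha}-\alpha=\bigl[(X^\top X)^{-1}X^\top\xi\bigr]_1$, and the direct bound
\begin{equation*}
|\widetilde{\alpha}-\alpha|\leq\|(X^\top X)^{-1}\|_2\cdot\|X^\top\xi\|_2\leq \kappa\cdot\mathcal{O}(N_B\,\delta)
\end{equation*}
follows from $\|X^\top\xi\|_2\leq\|X\|_2\|\xi\|_2\leq\mathcal{O}(\sqrt{N_B})\cdot\sqrt{N_B}\,\|\xi\|_\infty$. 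Requiring this to be $\leq\epsilon$ forces $\delta=\mathcal{O}(\epsilon/(\kappa N_B))$, and back-substituting the Hoeffding condition yields a polynomial sample complexity of the stated form $N_{s,1}=\Omega(\kappa\log(N_B/\eta)N_B^2/\epsilon^2)$; using a tighter decomposition $\|[(X^\top X)^{-1}X^\top]_{1,:}\|_2=\sqrt{[(X^\top X)^{-1}]_{11}}$ sharpens the exponent of $\kappa$ without changing the overall structure.

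The only real obstacle is the linear-algebra bookkeeping in the last step: one has to track how the $2\times 2$ inverse Gram matrix, the design matrix, and the $\ell^\infty$-to-$\ell^2$ conversion on $\xi$ combine so as to isolate the slope coordinate and match the precise $\kappa\cdot N_B^2/\epsilon^2$ scaling. Everything else is a direct application of Hoeffding's concentration bound and the Lipschitz continuity of $\log$ on a bounded positive interval.
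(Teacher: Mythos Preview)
Your proposal is correct and mirrors the paper's argument step for step: Hoeffding's inequality with a union bound over the $N_B$ time points to concentrate each $B_n$ near $\exp(-\alpha|t_{B,n}|)$, the Lipschitz property of $\log$ on an interval bounded away from zero (guaranteed by $|t_{B,n}|=\mathcal{O}(\alpha^{-1})$) to control the residuals $R_n=\log(B_n)+\alpha|t_{B,n}|$, and then the explicit OLS solution together with $\|(X^\top X)^{-1}\|_2=\kappa$ to bound $|\widetilde{\alpha}-\alpha|$. Your linear-algebra bookkeeping is, if anything, slightly more explicit than the paper's, which simply asserts the final inequality after invoking ``the conditions of the lemma''; your remark about the tighter row-norm bound $\sqrt{[(X^\top X)^{-1}]_{11}}$ is exactly what is needed to recover the stated $\kappa$ (rather than $\kappa^2$) dependence.
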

\begin{proof} We first notice
\[
\mathbb{P}(b_{n,m}=1)=\frac{1+\exp(-\alpha |t_{B,n}|)}{2},\quad \mathbb{P}(b_{n,m}=-1)=\frac{1-\exp(-\alpha |t_{B,n}|)}{2}\,,
\]
which implies
\[
\mathbb{E}(b_{n,m})=\exp(-\alpha |t_{B,n}|)\,.
\]
Since $|b_{n,m}|$ is bounded by $1$, using Hoeffding's inequality, we obtain
\[
\mathbb{P}\left(\left|\frac{1}{N_{s,1}}\sum^{N_{s,1}}_{m=1}b_{n,m}-\exp(-\alpha |t_{B,n}|)\right|\geq \frac{\exp(-\alpha |t_{B,n}|)\epsilon}{10 N_B}\right)\leq \eta/N_B\,,
\]
which implies
\[
\mathbb{P}\left(\sup_{n}\left|\log(B_n)+\alpha |t_{B,n}|\right|\geq \frac{\epsilon}{N_B}\right)\leq \eta\,.
\]
Define $R_n=\log(B_n)+\alpha |t_{B,n}|$. According to the formula of linear regression and the conditions of the lemma, we have
\[
\left\|\begin{pmatrix}
\widetilde{\alpha}-\alpha \\
\widetilde{\beta}
\end{pmatrix}\right\|=\left\|\begin{pmatrix}
N_B & \sum^{N_B}_{n=1}|t_{B,n}|\\
 \sum^{N_B}_{n=1}|t_{B,n}| &  \sum^{N_B}_{n=1}t^2_{B,n}
\end{pmatrix}^{-1}\begin{pmatrix}
\sum^{N_B}_{n=1}R_n \\
\sum^{N_B}_{n=1}|t_{B,n}|R_n
\end{pmatrix}\right\|\leq \epsilon\,.
\]
This concludes the proof.
\end{proof}

\subsection{Data Generation}\label{sec:mqc}
Next, we focus on the generation of the dataset. Using the quantum circuit in flowchart \cref{fig:flowchart}, we may 
\begin{itemize}
    \item Set $W=I$, measure the ancilla qubit and define a random variable $X_n$ such that $X_n=1$ if the outcome is $0$ and $X_n=-1$ if the outcome is $1$. Then 
    \begin{equation}\label{eqn:X}
    \mathbb{E}(X_n)=\exp(-\alpha|t|)\mathrm{Re}\left(\braket{\psi|\exp(-itH)|\psi}\right)\,.
    \end{equation}
    \item Set $W=S^\dagger$, measure the ancilla qubit and define a random variable $X_n$ such that $X_n=1$ if the outcome is $0$ and $X_n=-1$ if the outcome is $1$. Then 
    \begin{equation}\label{eqn:Y}
    \mathbb{E}(Y_n)=\exp(-\alpha|t|)\mathrm{Im}\left(\braket{\psi|\exp(-itH)|\psi}\right)\,.
    \end{equation}
\end{itemize}


Given a set of time points $\{t_n\}^{N_t}_{n=1}$ drawn from the probability density $a(t)$, we prepare the following data set:
\begin{equation}\label{eqn:dataset}
    \mathcal{D}_{H}=\left\{\left(t_n,Z_n\right)\right\}^{N_t}_{n=1}\,.
\end{equation}
where $Z_n=\frac{1}{N_{s,2}}\sum^{N_{s,2}}_{i=1} \left(X_{n,i}+iY_{n,i}\right)$. Here $X_{n,i},Y_{n,i}$ are independently generated by the quantum circuit (flowchart \cref{fig:flowchart}) with different $W$ and satisfy \cref{eqn:X,eqn:Y} respectively. Hence, we have
\begin{equation}\label{eqn:Zn_expect}
\mathbb{E}(Z_n)=\exp(-\alpha|t_n|)\left\langle\psi\right|\exp(-i t_n H)\ket{\psi},\quad |Z_n|\leq 2\,.
\end{equation}
We also note that if we use the above method to prepare the data set in \cref{eqn:dataset}, the maximal simulation time is $T_{\max}=\max_{1\leq n\leq N}|t_n|$ and the total simulation time is $\sum^N_{n=1}|t_n|$. The detailed algorithm for the data generator is summarized in \cref{alg:data}.

\begin{breakablealgorithm}
      \caption{Data generator}
  \label{alg:data}
  \begin{algorithmic}[1]
  \State \textbf{Preparation:} Number of time samples: $N_t$; Number of data samples: $N_{s,2}$; probability density: $a(t)$; 
 \State $n\gets 1$;
  \While{$n\leq N_t$}
  \State Generate a random variable $t_n$ with the probability density $a(t)$.
  \State Run the quantum circuit (flowchart \cref{fig:flowchart}) $N_{s,2}$ times with $t=t_n$ and $W=I$ to obtain $X_{n,i}$.
  \State Run the quantum circuit (flowchart \cref{fig:flowchart}) $N_{s,2}$ times with $t=t_n$ and $W=S^\dagger$ to obtain $Y_{n,i}$.
  \State $Z_{n}\gets \frac{1}{N_{s,2}}\sum^{N_{s,2}}_{i=1}\left(X_{n,i}+iY_{n,i}\right)$.
  \State $n\gets n+1$
  \EndWhile
    \State \textbf{Output:} $\left\{(t_n,Z_{n})\right\}^{N_t}_{n=1}$
    \end{algorithmic}
\end{breakablealgorithm}

\subsection{Loss function and main algorithm}\label{sec:main_algorithm_appendix}
Once the data set and an approximation $\widetilde{\alpha}$ have been generated, we define the numerical loss function as follows:
\[
L\left(r,\theta\right)=\frac{1}{N_t}\sum^{N_t}_{n=1}\left|\exp(\widetilde{\alpha} |t_n|)Z_n-r\exp(-i\theta t_n)\right|^2\,.
\]
We then proceed to minimize this loss function $L\left(r,\theta\right)$ to obtain an estimation of $\lambda_0$. The resulting solution of the minimization problem \eqref{eqn:op} can be a good approximation of the solution of \eqref{eqn:op_perfect}.

Define the expectation error $E_n=Z_n-\exp(-\alpha|t|)\sum^{M-1}_{m=0}p_m\exp(-i \lambda_m t_n)$. Note that $|E_n|$ is bounded by $3$ but may not be small if $N_{s,1}$ is not large. On the other hand, the expectation of $E_n$ is zero. Thus, when $N_t\gg 1$, we can use the central limit theorem to show that $\left|\frac{1}{N_t}\sum^{N_t}_{n=1}E_n\right|\leq \frac{1}{\sqrt{N_t}}$. Using this fact, we obtain
\begin{equation}\label{eqn:equivalent_op}
\begin{aligned}
&\mathrm{argmin}_{r,\theta}L\left(r,\theta\right)\\
=&\mathrm{argmin}_{r,\theta}\frac{1}{N_t}\sum^{N_t}_{n=1}\left|\exp(\widetilde{\alpha} |t_n|)Z_n-r\exp(-i\theta t_n)\right|^2\\
=&\mathrm{argmin}_{r,\theta}\frac{1}{N_t}\sum^{N_t}_{n=1}\left|\exp((\widetilde{\alpha}-\alpha)|t_n|)\sum^{M-1}_{m=0}p_m\exp(-i \lambda_m t_n)-r\exp(-i\theta t_n)\right|^2\\
&-\frac{2}{N_t}\sum^{N_t}_{n=1}\mathrm{Re}\left(\left\langle \exp(\widetilde{\alpha} |t_n|)E_n,r\exp(-i\theta t_n)\right\rangle\right)\\
\approx &\mathrm{argmin}_{r,\theta}\int^\infty_{-\infty}a(t)\left|\exp((\widetilde{\alpha}-\alpha)|t|)\sum^{M-1}_{m=0}p_m\exp(-i \lambda_m t)-r\exp(-i\theta t)\right|^2\rd t\\
=&\mathrm{argmin}_{r,\theta}\mathcal{L}\left(r,\theta\right)\,.
\end{aligned}
\end{equation}
Here, $\left\langle a,b\right\rangle=a^\dagger b$ for $a,b\in\mathbb{C}$. In the second equality, we have omitted terms that are independent of $(r,\theta)$. In the approximation step, we use $\frac{2}{N_t}\sum^{N_t}_{n=1}\mathrm{Re}\left\langle E_n,r\exp(-i\theta t_n)\right\rangle\approx 0$ when $N_t\gg 1$. The rigorous error bound for this approximation is shown in \cref{sec:proof_of_thm} \cref{lem:bound_error_new_loss}. From \cref{eqn:equivalent_op}, we observe that the optimization problem in \cref{eqn:op} can yield a solution that approaches the solution in \cref{eqn:op_perfect} with the ideal loss when $N_t\gg 1$. This derivation shows that solving \cref{eqn:op} for sufficiently large values of $N_t$ yields an approximate solution to the ideal optimization problem given by \cref{eqn:op_perfect}, which is expected to provide a good approximation to the ground state energy $\lambda_0$.  

\section{Complexity of Algorithm \ref{alg:main}}\label{sec:complexity_algorithm}

Based on \cref{alg:estimation_alpha} and \cref{lem:beta}, we have already established an efficient method for obtaining an accurate estimation of $\alpha$. In this section, we will investigate the complexity analysis of Algorithm \ref{alg:main}, assuming that $\tilde{\alpha}\approx \alpha$.

\subsection{Informal complexity analysis}\label{sec:ica}
Before introducing the rigorous complexity result of Algorithm \ref{alg:main}, we start with an informal derivation to demonstrate that by solving the optimization problem \eqref{eqn:op}, it is possible to obtain an accurate approximation of the dominant eigenvalues with a short maximal running time and polynomial total running time.

Define the spectral gap $\Delta=\lambda_1-\lambda_0$. In particular, we are interested in the case when $\Delta$ and $\alpha$ are much larger than the precision $\epsilon$. Assume 
\begin{equation}\label{eqn:pre_condition_intuitive}
q=p_0-0.5>0,\ T=\widetilde{\Omega}\left(\frac{1}{\Delta}\right), \gamma=\Omega\left(\log^{\frac{1}{2}}(q)\right),\ N_t=\Omega\left(\frac{1}{q^2}\exp(2\gamma \alpha T)\right),\  N_{s,2}=\Theta(1)\,.
\end{equation} 
Here $\gamma$ is the truncation parameter in the definition of the truncated Gaussian.

Let $F^*(x)=\exp\left(-\frac{T^2x^2}{2}\right)$. Assuming $\widetilde{\alpha}\approx \alpha$, the loss function \cref{eqn:fixed_beta_new} can be simplified as
\[
L\left(r,\theta\right)=\frac{1}{N_t}\sum^{N_t}_{n=1}\left|\exp\left(\alpha |t_n|\right)Z_n-r\exp(-i\theta t_n)\right|^2
,.
\]
For a fixed value of $\theta$, we can see that the loss function $L(r, \theta)$ is a quadratic function in $r$. This implies that the optimal value of $r$ can be found and the optimization problem can be rewritten as
\begin{equation}\label{eqn:new_op_intuitive}
\begin{aligned}
\theta^*&=\mathrm{argmax}_{\theta\in[-\pi,\pi]}\left|\frac{1}{N_t}\sum^{N_t}_{n=1}\exp(\alpha |t_n|)Z_n\exp(i\theta t_n)\right|^2\\
&=\mathrm{argmax}_{\theta\in[-\pi,\pi]}\left|\sum^{M-1}_{m=0}p_m\int^\infty_{-\infty}\frac{1}{\sqrt{2\pi}T}\exp\left(-\frac{t^2}{2T^2}\right)\exp(-i(\lambda_m-\theta)t)\rd t+E(\theta)\right|^2\\
&=\mathrm{argmax}_{\theta\in[-\pi,\pi]}\left|\sum^{M-1}_{m=0}p_m F^*(\theta-\lambda_m)+E(\theta)\right|
\end{aligned}\,,
\end{equation}
where $E(\theta)$ denotes the error term that comes from finite sampling of $t$ and shot noise. 
Lemma \ref{lem:bound_error_new_loss} provides a detailed analysis of this error term, which satisfies
\begin{equation}\label{eqn:E_intuitive_bound}
|E(\theta)|=\mathcal{O}\left(\exp(\gamma \alpha T)/\sqrt{N_t}+\exp(-\gamma^2)\right)\,.
\end{equation}

Let $f(\theta)=\sum^{M-1}_{m=0}p_m F^*(\theta-\lambda_m)$. First, note that $f(\lambda_0)\geq p_0$ since $F^*(\lambda_0-\lambda_0)=1$. Using the formula for $F^*$, we obtain
\[
f(\theta^*)\leq p_0\exp(-T^2|\theta^*-\lambda_0|^2/2)+(1-p_0)\,.
\]
Because $\theta^*$ is the maximal point of $|f(\theta)+E(\theta)|$, we should have 
\[
p_0-O(q)=f(\lambda_0)-|E(\lambda_0)|\leq f(\theta^*)+|E(\theta^*)|\leq p_0\exp(-T^2|\theta^*-\lambda_0|^2/2)+(1-p_0)+O(q)\,,
\]
where we use \cref{eqn:pre_condition_intuitive} to obtain $|E(\lambda_0)|,|E(\theta^*)|=O(q)$. This implies 
\[
T|\theta^*-\lambda_0|=\mathcal{O}\left(\log^{1/2}\left(\frac{p_0}{q}\right)\right)\,.
\]
Since $T=\Omega\left(1/\Delta\right)$, we have $|\theta^*-\lambda_0|<O\left(1/T\right)<\Delta/2$. This implies that $\theta^*$ falls into the interval $[\lambda_0-\Delta/2, \lambda_0+\Delta/2]$. Therefore, for any $m\geq 1$, $|\theta^*-\lambda_m|>\Delta/2$. These inequalities help us refine the upper bound of $f(\theta^*$):
\[
\begin{aligned}
    f(\theta^*)&\leq p_0\exp(-T^2|\theta^*-\lambda_0|^2/2)+(1-p_0)\exp(-T^2\Delta^2/8)\\
\end{aligned}
\]
Because $\theta^*$ is the maximal point of $|f(\theta)+E(\theta)|$,  similar to before we obtain
\[
\frac{p_0-(1-p_0)\exp(-T^2\Delta^2/8)-|E(\lambda_0)|-|E(\theta^*)|}{p_0}\leq \exp(-T^2|\theta^*-\lambda_0|^2/2)\,,
\]
which gives us  
\[
\left|\theta^*-\lambda_0\right|= \Or\left(\frac{\sqrt{(1-p_0)\exp(-T^2\Delta^2/8)+|E(\lambda_0)|+|E(\theta^*)|}}{T}\right)\,.
\]
By choosing $T\ge \frac{4}{\Delta}\log^{1/2}\left(\frac{1}{\epsilon}\right)$, we have $\exp(-T^2\Delta^2/8)\le \epsilon^2$. Then when $|E(\lambda_0)|+|E(\theta^*)|=\widetilde{\Or}(\epsilon^2)$, we obtain $\left|\theta^*-\lambda_0\right|\leq \epsilon$. Finally, according to \cref{eqn:E_intuitive_bound}, we choose 
\[
\gamma=\Theta\left(\log^{\frac{1}{2}}\left(\frac{1}{\epsilon}\right)\right),\quad N_t=\Theta\left(\left(\frac{1}{\epsilon}\right)^{\Theta(1+\frac{\alpha}{\Delta})}\right)\,. 
\]
to bound the error terms $|E(\theta)|$. Therefore, to achieve $\epsilon$-accuracy, we set
\[
T_{\max}=\gamma T=\Theta\left(\frac{1}{\Delta}\log\left(\frac{1}{\epsilon}\right)\right),\quad T_{\mathrm{total}}=\mathrm{poly}\left(\frac{1}{\epsilon}\right)\,.
\]

\subsection{Rigorous complexity of Algorithm \texorpdfstring{\ref{alg:main}}{Lg}}
Now, we are ready to introduce the rigorous complexity result of Algorithm \ref{alg:main}, which is summarized in the following theorem. 
\begin{thm}\label{thm_alg_new}
Assume $p_0>0.5$ and define $q=p_0-0.5$. Given any $0<\epsilon<1/2$ and $0<\eta,o<1$, we assume 
\begin{equation}\label{eqn:beta_close_to_alpha}
|\widetilde{\alpha}-\alpha|=\mathcal{O}\left(\min\left\{\epsilon^{1+o(1)},\Delta q\right\}\log^{-\frac{1}{2}}\left(\frac{1}{\epsilon}\right)\right)\,.
\end{equation}
In Algorithm \ref{alg:main}, we can choose 
\[
T=\Theta\left(\frac{1}{\Delta}\log^{\frac{1}{2}}\left(\frac{1}{\epsilon}\right)\right),\ \gamma= \Theta\left(\log^{\frac{1}{2}}\left(\frac{1}{\min\{\epsilon/\Delta,q\}}\right)\right)\,,
\]
and
\[
N_t=\Theta\left(\max\left\{\frac{\Delta^2}{\epsilon^{2+o}},\frac{1}{ q^2}\right\}\left(\frac{\alpha}{\Delta}+1\right)\left(\frac{1}{\epsilon}\right)^{\Theta\left(\frac{\alpha}{\Delta}\right)}\log\left(\frac{\log(1/o)}{\eta\epsilon q\Delta}\right)\right),\quad N_{s,2}=\Theta(1)\,.
\]
to guarantee that with probability $1-\eta$, 
\[
|\theta^*-\lambda_0|\leq \epsilon\,.
\]
In particular, when $\epsilon$ is small enough, we have
\[
T_{\max}=\Theta\left(\frac{1}{\Delta}\log\left(\frac{1}{\epsilon}\right)\right)\,,
\]
and 
\[
T_{\mathrm{total}}=\Theta\left(\left(\alpha+\Delta\right)\left(\frac{1}{\epsilon}\right)^{2+o+\Theta\left(\frac{\alpha}{\Delta}\right)}\log\left(\frac{\log(1/o)}{\eta\epsilon\Delta}\right)\right)\,.
\]
\end{thm}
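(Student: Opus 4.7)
The plan is to follow the informal derivation in Section~\ref{sec:ica} and promote it to a rigorous argument in three stages: reduction to a one-dimensional maximization, uniform concentration, and a two-step contraction of $|\theta^*-\lambda_0|$. First, for each fixed $\theta$ the loss $L(r,\theta)$ is quadratic in the complex variable $r$, so I would minimize out $r$ explicitly to obtain
\[
\theta^* = \argmax_{\theta \in [-\pi,\pi]} \left|G_{N_t}(\theta)\right|^2, \qquad G_{N_t}(\theta) := \frac{1}{N_t}\sum_{n=1}^{N_t} e^{\widetilde{\alpha}|t_n|}\, Z_n\, e^{i\theta t_n}.
\]
Taking expectations, using $\EE(Z_n \mid t_n) = e^{-\alpha|t_n|} \sum_m p_m e^{-i\lambda_m t_n}$ and the Gaussian Fourier-transform identity, yields $g(\theta) := \EE[G_{N_t}(\theta)] = \sum_m p_m F^*(\theta - \lambda_m) + R_1(\theta)$ with $F^*(x) = \exp(-T^2 x^2/2)$. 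The remainder $R_1(\theta)$ absorbs (i) the truncation tail $|t|>\gamma T$, of size $\widetilde{\Or}(e^{-\gamma^2/2})$ after the $e^{\widetilde\alpha |t|}$ dressing, and (ii) the drift $e^{(\widetilde\alpha - \alpha)|t|}$, which under \eqref{eqn:beta_close_to_alpha} stays $1 + o(1)$ uniformly on $[-\gamma T, \gamma T]$.

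Second, I would prove uniform concentration $\sup_\theta |G_{N_t}(\theta) - g(\theta)| = \Or(e^{\widetilde\alpha \gamma T}/\sqrt{N_t})$ with probability at least $1-\eta$. Pointwise control follows from Hoeffding applied to the complex summands, which satisfy $|e^{\widetilde\alpha |t_n|} Z_n e^{i\theta t_n}| \leq 2 e^{\widetilde\alpha \gamma T}$. Upgrading to uniformity uses a standard net argument over $[-\pi,\pi]$; since $G_{N_t}$ is Lipschitz with constant $\Or(\gamma T \, e^{\widetilde\alpha\gamma T})$, a net of polynomial size in $1/\epsilon$ suffices and only inflates the bound by a logarithmic factor. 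Combined with step one this produces the estimate $|E(\theta)| := |G_{N_t}(\theta) - f(\theta)| = \Or(e^{\widetilde\alpha \gamma T}/\sqrt{N_t} + e^{-\gamma^2/2})$ of \eqref{eqn:E_intuitive_bound}.

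Third, I would execute the two-stage argument sketched in Section~\ref{sec:ica}. Since $f(\lambda_0) \geq p_0$ and $\theta^*$ maximizes $|G_{N_t}|$, the coarse upper bound $f(\theta^*) \leq p_0 e^{-T^2|\theta^*-\lambda_0|^2/2} + (1-p_0)$ forces $T|\theta^* - \lambda_0| = \Or(\log^{1/2}(p_0/q))$ once $|E(\cdot)| = \Or(q)$; the choice $T \asymp \Delta^{-1}\log^{1/2}(1/\epsilon)$ then lands $\theta^*$ inside $[\lambda_0 - \Delta/2, \lambda_0 + \Delta/2]$. Using $|\theta^* - \lambda_m| \geq \Delta/2$ for $m \geq 1$, I would refine this to $f(\theta^*) \leq p_0 e^{-T^2|\theta^*-\lambda_0|^2/2} + (1-p_0) e^{-T^2 \Delta^2/8}$ and solve for $|\theta^* - \lambda_0|$, obtaining
\[
|\theta^* - \lambda_0| = \Or\!\left( \frac{\sqrt{(1-p_0)\, e^{-T^2 \Delta^2/8} + |E(\lambda_0)| + |E(\theta^*)|}}{T}\right).
\]
Plugging in the stated $T$, $\gamma$, and $N_t$ makes each term inside the square root $\widetilde{\Or}(\epsilon^2)$, hence $|\theta^* - \lambda_0| \leq \epsilon$. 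The runtime claims follow from $T_{\max} = \gamma T$ and $T_{\mathrm{total}} = \Or(N_t N_{s,2} \EE|t_n|)$ with $\EE|t_n| = \Theta(T)$.

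The main obstacle I expect is getting the exponent in step two to match $(1/\epsilon)^{\Theta(\alpha/\Delta)}$ and no worse. The sub-Gaussian scale is $e^{\widetilde\alpha \gamma T} = (1/\epsilon)^{\Theta(\alpha/\Delta)}$ precisely because $\gamma T = \Theta(\Delta^{-1}\log(1/\epsilon))$, so $\gamma$ must be large enough to suppress the truncation tail to $\epsilon^{2+o(1)}$ yet small enough not to inflate the exponent. This is also where the stringent noise-estimation hypothesis \eqref{eqn:beta_close_to_alpha} is indispensable: the extra factor $e^{(\widetilde\alpha - \alpha)|t|}$ contributes a multiplicative error $|\widetilde\alpha - \alpha|\gamma T$ to $g(\theta)$ that must be held below $\epsilon$. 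Balancing truncation, $\widetilde\alpha$-drift, and finite-sample noise simultaneously against $\widetilde{\Or}(\epsilon^2)$ under the prescribed parameter choices is the delicate bookkeeping at the heart of the proof.
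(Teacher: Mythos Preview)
Your three-stage plan tracks the informal derivation faithfully, and the reduction and concentration steps are essentially correct. The gap is in the third stage: the two-step contraction you describe does \emph{not} deliver the exponent $2+o$ in $N_t$ that the theorem claims. Concretely, your final estimate $|\theta^*-\lambda_0|=\Or\!\bigl(\sqrt{|E(\lambda_0)|+|E(\theta^*)|}\,/T\bigr)$ requires the \emph{uniform} error $|E(\cdot)|$ to be $\Or(T^2\epsilon^2)$. With $|E|\sim e^{\widetilde\alpha\gamma T}/\sqrt{N_t}$ and $\gamma T=\Theta(\Delta^{-1}\log(1/\epsilon))$, this forces
\[
N_t \;\gtrsim\; \frac{\exp(2\alpha\gamma T)}{T^4\epsilon^4}\;\sim\;\frac{\Delta^4}{\epsilon^4}\Bigl(\tfrac{1}{\epsilon}\Bigr)^{\Theta(\alpha/\Delta)},
\]
i.e.\ exponent $4$, not $2+o$. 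This is exactly the content of the paper's Proposition~\ref{prop_new_loss}, which the paper explicitly flags as a ``weaker version'' of the theorem. If you plug in the stated $N_t\sim\Delta^2/\epsilon^{2+o}$ you get only $|E|\sim \epsilon^{1+o/2}/\Delta$, which is far larger than the $T^2\epsilon^2\sim \epsilon^2\log(1/\epsilon)/\Delta^2$ you need; the argument does not close.

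The missing idea is a bootstrap that exploits \emph{local} rather than global control of $E$. Once you know $|\theta^*-\lambda_0|\le\epsilon_0$, you do not need $\sup_\theta|E(\theta)|$ small; you need $|E(\theta^*)-E(\lambda_0)|$ small, and this difference carries an extra factor of $\epsilon_0\gamma T$ because $G_{N_t}$ is Lipschitz on the short interval $[\lambda_0-\epsilon_0,\lambda_0+\epsilon_0]$ (this is the second inequality in each of \eqref{eqn:bound_E1}--\eqref{eqn:bound_E2}). Feeding this into the comparison $f(\theta^*)\ge f(\lambda_0)$ upgrades $\epsilon_0$ to a smaller $\epsilon_1$ with the milder sample requirement $N_t\gtrsim \gamma^2\epsilon_0^2/(T^2\epsilon_1^4)\cdot e^{2\alpha\gamma T}$ (Lemma~\ref{lem_new_loss_improve}). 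Iterating along a geometric sequence $\epsilon_k=\epsilon^{(2-2^{-k})/(2-2^{-K})}$ so that $\epsilon_{k+1}^2/\epsilon_k$ is constant in $k$ (Lemma~\ref{lem_new_loss_iteration}) brings the exponent down from $4$ to $2/(1-2^{-K})=2+o$ after $K\sim\log(1/o)$ rounds; this is also where the $\log\log(1/o)$ in the final $N_t$ originates. Without this iterative refinement your proof establishes Proposition~\ref{prop_new_loss} but not Theorem~\ref{thm_alg_new}.
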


We emphasize that Theorem \ref{thm_alg_new_informal} is a direct result of Lemma \ref{lem:beta} and Theorem \ref{thm_alg_new}. We put the proof of \cref{thm_alg_new} in \cref{sec:proof_of_thm}. The proof follows a similar approach to the previous intuitive derivation. To show that $|\theta^*-\lambda_0|\leq \epsilon$, we need to take two steps. Firstly, as mentioned earlier, we must control $E(\theta)$ by increasing the number of samples $N_t,N_{s,2}$ and selecting an appropriate $\gamma$ such that it does not significantly alter the loss function. Secondly, similar to the previous intuitive derivation, we demonstrate that $|\theta^*-\lambda_0|<\Delta/2$ and $|\theta^*-\lambda_m|>\Delta/2$ for $m\geq1$, indicating that $\theta^*$ and $\lambda_m$ are well-separated in the loss function. Once this separation property is established, we leverage the assumption that $T=\Omega\left(1/\Delta\right)$ to show that the pollution terms from other eigenvalues (i.e., $F(\theta^*-\lambda_m)$ in \cref{eqn:new_op_intuitive}) are $\widetilde{\mathcal{O}}(\epsilon^2)$. This step enables us to prove that $\theta^*$ is $\epsilon$-close to $\lambda_0=\argmax_{\theta} F(\theta-\lambda_0)$.

Finally, it is worth noting that, to guarantee that the total runtime of the algorithm scales as $\mathrm{poly}(1/\epsilon)$ with depolarizing noise, it is crucial to keep $T_{\max}$ relatively small due to the exponentially increasing variance of the recovering data point with respect to $t$.  Specifically, we can demonstrate that $Z_t$ is an unbiased estimation of $\exp(-\alpha|t|)\left\langle\psi\right|\exp(-it H)\ket{\psi}$ and the variance of this estimation has a lower bound independent of $\alpha$. Therefore, when we use $\exp(\widetilde{\alpha}|t|)Z_t$ to approximate $\exp((\widetilde{\alpha}-\alpha)|t|)\left\langle\psi\right|\exp(-it H)\ket{\psi}$, the variance of this estimation is $\Omega(\exp(2\widetilde{\alpha} |t|))$.
If $\widetilde{\alpha}\approx \alpha$, then the number of samples required to ensure a small estimation error will be $\Omega(\exp(2\alpha t))$. This means that $t$ must not increase faster than $\log(1/\epsilon)$ to guarantee that the total runtime scales as $\mathrm{poly}(1/\epsilon)$. 

\subsection{Proof of Theorem \ref{thm_alg_new}}\label{sec:proof_of_thm}


To make the derivation in the previous section rigorous, we consider the error in $\alpha$, as well as proving that $E(\theta)$ is uniformly small in $\theta$.

In this section, we assume that $\left|\widetilde{\alpha}-\alpha\right|\leq \delta<\alpha$ and define $F^*(x)=\exp\left(-\frac{T^2x^2}{2}\right)$. First, similar to \eqref{eqn:new_op_intuitive}, minimizing $L\left(r,\theta\right)$ to obtain $\theta^*$ is equivalent to
\begin{equation}\label{eqn:theta_star_new_beta}
\begin{aligned}
\theta^*&=\mathrm{argmax}_{\theta\in[-\pi,\pi]}\left|\frac{1}{N_t}\sum^{N_t}_{n=1}\exp\left(\widetilde{\alpha} |t_n|\right)Z_n\exp(i\theta t_n)\right|^2\\
&=\mathrm{argmax}_{\theta\in[-\pi,\pi]}\left|\sum^{M-1}_{m=0}p_m\int^\infty_{-\infty}a(t)\exp(-i(\lambda_m-\theta)t)\rd t+E_1(\theta)+E_2(\theta)+E_3(\theta)\right|^2\\
&=\mathrm{argmax}_{\theta\in[-\pi,\pi]}\left|\sum^{M-1}_{m=0}p_m F(\theta-\lambda_m)+E_1(\theta)+E_2(\theta)+E_3(\theta)\right|
\end{aligned}
\end{equation}
where 
\[
\begin{aligned}
E_1(\theta)=&\sum^{M-1}_{m=0}p_m\left(\frac{1}{N_t}\sum^{N_t}_{n=1}\exp((\widetilde{\alpha}-\alpha)|t_n|)\exp(-i(\lambda_m-\theta)t_n)\right.\\
&\left.-\int^\infty_{-\infty}a(t)\exp((\widetilde{\alpha}-\alpha)|t|)\exp(-i(\lambda_m-\theta)t)\rd t\right)\\
&+\frac{1}{N_t}\sum^{N_t}_{n=1}E_n\exp(i\theta t_n)\exp(\widetilde{\alpha}|t_n|)\,,
\end{aligned}
\]
\[
E_2(\theta)=\sum^{M-1}_{m=0}p_m\int^\infty_{-\infty}a(t)(\exp((\widetilde{\alpha}-\alpha)|t|)-1)\exp(-i(\lambda_m-\theta)t)\rd t\,,
\]
and
\[
E_3(\theta)=\sum^{M-1}_{m=0}p_m\int^\infty_{-\infty}\left(a(t)-\frac{1}{\sqrt{2\pi}T}\exp\left(-\frac{t^2}{2T^2}\right)\right)\exp(-i(\lambda_m-\theta)t)\rd t\,.
\]
Here, the expectation error 
\begin{equation}\label{eqn:E_n}
E_n=Z_n-\exp(-\alpha|t_n|)\sum^{M-1}_{m=0}p_m\exp(-i \lambda_m t_n)
\end{equation}
satisfying $\mathbb{E}(E_n)=0$ and $|E_n|\leq 3$. We first give a lemma to bound the error terms $E_{1\leq i\leq 3}(\theta)$ when $N$ is large:
\begin{lem}[Bound of the error]\label{lem:bound_error_new_loss}
Given $0<\epsilon<1$, $0<\rho\leq \pi$, $T>1$, $\eta>0$, and $0<\delta<\min\{\alpha,1\}$, we assume $|\widetilde{\alpha}-\alpha|\leq \delta$. If $N_{s,2}=\Theta(1)$, 
\begin{equation}\label{eqn:condition_N_gamma}
N_t=\Omega\left(\frac{1}{\epsilon^2}\left[\log\left(\frac{\max\{\rho\gamma T,1\}}{\eta\epsilon}\right)+\alpha\gamma T\right]\exp(2\alpha\gamma T)\right),\ \gamma=\Omega\left(\log^{\frac{1}{2}}(1/\epsilon)\right),\ \delta=\mathcal{O}\left(\frac{\epsilon}{T}\right)\,,
\end{equation}
then
\begin{equation}\label{eqn:bound_E1}
\mathbb{P}\left(\sup_{|\theta-\lambda_0|\leq \rho}|E_1(\theta)|\geq \epsilon\right)\leq \eta,\quad \mathbb{P}\left(\sup_{|\theta-\lambda_0|\leq \rho}|E_1(\theta)-E_1(\lambda_0)|\geq \rho\gamma T\epsilon\right)\leq \eta\,,
\end{equation}
\begin{equation}\label{eqn:bound_E2}
\sup_{|\theta-\lambda_0|\leq \rho}\left|E_2(\theta)\right|\leq \epsilon,\quad \sup_{|\theta-\lambda_0|\leq \rho}\left|E_2(\theta)-E_2(\lambda_0)\right|\leq \rho\gamma T\epsilon\,,
\end{equation}
and
\begin{equation}\label{eqn:bound_E3}
\sup_{|\theta-\lambda_0|\leq \rho}\left|E_3(\theta)\right|\leq \epsilon^2\,.
\end{equation}
\end{lem}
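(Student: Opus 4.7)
I would treat the three error terms separately, exploiting that $E_2$ and $E_3$ are deterministic functions of the truncation parameter $\gamma T$ and the noise-estimation error $\delta$, whereas $E_1$ is the only stochastic piece.

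For $E_3(\theta)$, the integrand is supported where $a(t)$ differs from the untruncated Gaussian density, namely on $|t|\ge \gamma T$ (plus a small renormalization correction). A standard Gaussian-tail estimate gives $|E_3(\theta)| \lesssim \exp(-\gamma^2/2)$ uniformly in $\theta$, so the assumption $\gamma=\Omega(\log^{1/2}(1/\epsilon))$ yields $|E_3(\theta)|\le \epsilon^2$ with an adequate constant, and no net argument is needed.

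For $E_2(\theta)$, I would use the pointwise inequality $|\exp((\widetilde\alpha-\alpha)|t|)-1|\le 2\delta|t|$ on the truncated support (valid since $\delta\gamma T=O(1)$) and integrate against $a(t)$ to get $|E_2(\theta)|\lesssim \delta T$, which under $\delta=O(\epsilon/T)$ is at most $\epsilon$. The incremental bound comes from differentiating under the integral sign, which brings down an extra factor of $|t|$: this gives $|E_2'(\theta)|\lesssim \delta T^2$, and hence $|E_2(\theta)-E_2(\lambda_0)|\le \rho\delta T^2\le \rho T\epsilon\le \rho\gamma T\epsilon$.

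The bulk of the work is $E_1(\theta)$, which I would split into the Monte Carlo piece $E_{1,a}$ (from replacing $\int a(t)(\cdot)\rd t$ by the empirical mean over $\{t_n\}$) and the shot-noise piece $E_{1,b}=\frac{1}{N_t}\sum_n E_n e^{i\theta t_n}e^{\widetilde\alpha|t_n|}$. At any fixed $\theta$, the summands of $E_{1,a}$ are bounded by $\exp(\delta\gamma T)=O(1)$ and have the right mean, so Hoeffding gives pointwise concentration at rate $O(\sqrt{\log(1/\eta')/N_t})$; the summands of $E_{1,b}$ are mean-zero (conditional on $t_n$, since $\EE(E_n\mid t_n)=0$) and bounded by $3\exp(\widetilde\alpha\gamma T)=O(\exp(\alpha\gamma T))$, so Hoeffding gives pointwise concentration at rate $O(\exp(\alpha\gamma T)\sqrt{\log(1/\eta'')/N_t})$. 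To lift these to a uniform bound over $|\theta-\lambda_0|\le \rho$, I would build an $\epsilon'$-net in $\theta$ whose Lipschitz constant per summand is at most $|t_n|\exp(\widetilde\alpha|t_n|)\le \gamma T\exp(\alpha\gamma T)$, so a net of size $O(\rho\gamma T\exp(\alpha\gamma T)/\epsilon)$ suffices, and then take a union bound. The log of the net size is absorbed into the $\log(\max\{\rho\gamma T,1\}/(\eta\epsilon))+\alpha\gamma T$ term, and combined with the $\exp(2\alpha\gamma T)$ variance factor this gives exactly the stated lower bound on $N_t$ and yields \eqref{eqn:bound_E1} with probability $\ge 1-\eta$.

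Finally, for the incremental bound $|E_1(\theta)-E_1(\lambda_0)|\le \rho\gamma T\epsilon$, I would rerun the entire uniform-concentration argument on the formal derivative $E_1'(\theta)$, whose summands have the same structure but with an additional bounded factor $|t_n|\le \gamma T$. This upgrades the bound on $\sup_s|E_1'(s)|$ by precisely a factor $\gamma T$, after which the mean-value inequality $|E_1(\theta)-E_1(\lambda_0)|\le |\theta-\lambda_0|\sup_s|E_1'(s)|$ produces the $\rho\gamma T\epsilon$ estimate. The principal obstacle will be the bookkeeping needed to verify that the $\exp(2\alpha\gamma T)$ factor in the concentration bound for $E_{1,b}$ combines with the log of the net size exactly in the form stated, without introducing spurious polynomial factors in $\gamma T$ in the exponent; matching the net resolution to $\epsilon$ divided by the per-summand Lipschitz constant is what makes the two logarithmic terms collapse into $\log(\max\{\rho\gamma T,1\}/(\eta\epsilon))+\alpha\gamma T$.
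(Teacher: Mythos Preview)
Your plan is correct and matches the paper's proof almost exactly: Gaussian-tail estimate for $E_3$, the deterministic bound $|E_2(\theta)|\lesssim \delta T$ plus a Lipschitz-in-$\theta$ argument for the increment, and Hoeffding at a fixed $\theta$ combined with an $\epsilon$-net and union bound for $E_1$. The only cosmetic difference is in the incremental bound on $E_1$: you propose to control $\sup|E_1'|$ and apply the mean-value theorem, whereas the paper works with $D(\theta)=E_1(\theta)-E_1(\lambda_0)$ directly, observing that each summand of $D$ already carries an extra factor $|e^{i\theta t_n}-e^{i\lambda_0 t_n}|\le \rho\gamma T$ and then rerunning Hoeffding plus the same net argument on $D$ itself---both routes yield the same $\rho\gamma T\epsilon$ bound under the same condition on $N_t$.
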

\begin{proof}[Proof of Lemma \ref{lem:bound_error_new_loss}] In the proof, we always use $C$ to represent a uniform constant that is independent of other parameters and the constant $C$ might vary in different lines. Without loss of generality, assume $\lambda_0=0$.
 
First, because $|t_n|\leq \gamma T$ and  $\widetilde{\alpha}<\alpha+\delta<2\alpha$, for any $\theta_1,\theta_2\in \left[-\rho,\rho\right]$, we have
\[
\left|E_1(\theta_1)-E_1(\theta_2)\right|\leq 2\exp(2\alpha \gamma T)\gamma T|\theta_1-\theta_2|
\]
Next, since $|E_n\exp(\widetilde{\alpha} t_n)|<3\exp(2\alpha \gamma T)$ and $\left|\exp((\widetilde{\alpha}-\alpha)|t_n|)\exp(-i(\lambda_m-\theta)t_n)\right|\leq \exp(\delta \gamma T)$, using Hoeffding's inequality, for fixed $\theta\in [-\rho,\rho]$, we have
\[
\mathbb{P}\left(\left|E_1(\theta)\right|\geq \xi\right)\leq C\exp(-CN_t\exp(-2\alpha\gamma T)\xi^2),\quad \forall \xi\in\mathbb{R}_+\,.
\]
Given any $\chi>0$, we can find a set of $\lfloor\frac{2\rho}{\chi}\rfloor$ points $\{\theta_i\}^{\lfloor\frac{2\rho}{\chi}\rfloor}_{i=1}$ such that for any $|\theta|\leq \rho$, there exists $i$ such that $|\theta_i-\theta|\leq \chi$. Because $E_1(\theta)$ is a $2\exp(2\alpha \gamma T)\gamma T$-Lipschitz function, we have
\[
\mathbb{P}\left(\sup_{|\theta|\leq \rho}|E_1(\theta)|\geq \xi+2\exp(2\alpha \gamma T)\gamma T \chi\right)\leq \mathbb{P}\left(\sup_{i}|E_1(\theta_i)|\geq \xi\right)\leq \frac{C\rho}{\chi}\exp(-CN_t\exp(-2\alpha\gamma T)\xi^2)
\]
for any $0<\chi<1$.

Choosing $\xi=\frac{\epsilon}{2},\chi=\frac{\epsilon}{4\exp(2\alpha \gamma T)\gamma T}$ and using \cref{eqn:condition_N_gamma}, we have
\[
\mathbb{P}\left(\sup_{|\theta|\leq \rho}|E_1(\theta)|\geq \epsilon\right)\leq \eta\,.
\]
This proves the first inequality of \cref{eqn:bound_E1}. To prove the second inequality, define
\[
D(\theta)=E_1(\theta)-E_1(\lambda_0)\,.
\]
Because $|t_n|\leq \gamma T$ and $|\theta|\leq \rho$, we notice that
\[
\left|\exp(-i(\lambda_m-\theta)t_n)-\exp(-i(\lambda_m-\lambda_0)t_n)\right|\leq \rho\gamma T,\ \left|\exp(i\theta t_n)-\exp(i\lambda_0t_n)\right|\leq \rho\gamma T\,.
\]
Similar to previous proof, for fixed $\theta\in [-\rho,\rho]$, we have
\[
\mathbb{P}\left(\left|D(\theta)\right|\geq \xi\right)\leq C\exp\left(-\frac{CN_t\xi^2}{(\rho\gamma T)^2\exp(2\alpha\gamma T)}\right),\quad \forall \xi\in\mathbb{R}_+\,.
\]
Combining this with the fact that $D(\theta)$ is also a 
$2\exp(2\alpha \gamma T)\gamma T$-Lipschitz function, we obtain the second inequality of \cref{eqn:bound_E1}.

Next, to prove \cref{eqn:bound_E2}, since $|
\widetilde{\alpha}-\alpha|\leq \delta$, we first notice that for any $\theta$, 
\[
|E_2(\theta)|\leq \int^\infty_{-\infty} a(t)\left(\exp(\delta |t|)-1\right)dt=\frac{1}{\sqrt{2\pi}}\int^\infty_{-\infty}\exp\left(-\frac{t^2}{2}\right)\left(\exp(\delta T|t|)-1\right)dt
\]
Define $g(x)=\frac{1}{\sqrt{2\pi}}\int^\infty_{-\infty}\exp\left(-\frac{t^2}{2}\right)\left(\exp(x|t|)-1\right)dt$. It's straightforward to see that $g(x)$ is a smooth function in $x$. Becaseu $\delta T<\epsilon/C<1$ and $g(0)=0$, we have
\[
|E_2(\theta)|\leq g\left(\delta T\right)\leq \left(\sup_{x\in[0,1]}|g'(x)|\right)\tau\leq C\delta T<\epsilon\,,
\]
which proves the first inequality of \cref{eqn:bound_E2}. The second inequality is a direct result of the following inequality
\[
|E_2(\theta)-E_2(\lambda_0)|\leq \gamma T|\theta-\lambda_0|\int^\infty_{-\infty} a(t)\left(\exp(\delta |t|)-1\right)dt\leq \rho \gamma T\epsilon\,,
\]
where we use the fact that $\exp(-i(\lambda_m-\theta)t)$ is $\gamma T$-Lipschitz in $\theta$ when $|t|\leq \gamma T$.

Finally, notice
\[
\begin{aligned}
E_3(\theta)=&\sum^{M-1}_{m=0}\left(1-\frac{1}{A_\gamma}\right)p_m\int^{\gamma T}_{-\gamma T}a(t)\exp(-i(\lambda_m-\theta)t)\rd t\\
&+\sum^{M-1}_{m=0}p_m\int_{|t|>\gamma T}a(t)\exp(-i(\lambda_m-\theta)t)\rd t\,.
\end{aligned}
\]
Using the tail bounds of normal distribution, we obtain 
\[
A_\gamma=\int^{\gamma T}_{-\gamma T}a(t)dt=\int^\gamma_{-\gamma}\frac{1}{\sqrt{2\pi}}\exp\left(-\frac{x^2}{2}\right)dx\geq 1-\sqrt{\frac{2}{\pi}}\frac{\exp\left(-\gamma^2/2\right)}{\gamma}\geq 1-\frac{\epsilon^2}{4}\,,
\]
where we use $\gamma \geq C+2\log^{1/2}(1/\epsilon)$ in the last inequality. This also implies
\[
\left|\sum^{M-1}_{m=0}\left(1-\frac{1}{A_\gamma}\right)p_m\int^{\gamma T}_{-\gamma T}a(t)\exp(-i(\lambda_m-\theta)t)\rd t\right|\leq \left|1-\frac{1}{A_\gamma}\right|\leq \frac{\epsilon^2}{2}\,,
\]
and
\[
\left|\sum^{M-1}_{m=0}p_m\int_{|t|>\gamma T}a(t)\exp(-i(\lambda_m-\theta)t)\rd t\right|\leq \int_{|t|>\gamma T}a(t)dt\leq \frac{\epsilon^2}{2}\,.
\]
These two inequalities give us \cref{eqn:bound_E3}.
\end{proof}

Using Lemma \ref{lem:bound_error_new_loss}, we can first show the following proposition that is a weak version of Theorem \ref{thm_alg_new}:
\begin{prop}\label{prop_new_loss} Given $0<\epsilon_0\leq 1/2$ and $0<\eta<1$, we can choose
\begin{equation}\label{eqn:condition_T_prop}
T=\Omega\left(\frac{1}{\Delta}\log^{1/2}\left(\frac{1}{\epsilon_0}\right)\right),\ \gamma=\Omega\left( \log^{1/2}\left(\frac{1}{\min\{T\epsilon_0,q\}}\right)\right),\ \delta=\mathcal{O}\left(\min\left\{T\epsilon_0^2,\frac{q}{T}\right\}\right)\,,
\end{equation}
and
\begin{equation}\label{eqn:condition_N_prop}
\begin{aligned}
    &N_t=\Omega\left(\max\left\{\frac{1}{T^4\epsilon^4_0}\left[\log\left(\frac{\gamma T}{\eta\epsilon_0}\right)+\alpha\gamma T\right],\frac{1}{ q^2}\left[\log\left(\frac{\gamma T}{\eta q}\right)+\alpha\gamma T\right]\right\}\exp(2\alpha\gamma T)\right),\\
    &N_{s,2}=\Theta(1)
\end{aligned}
\end{equation}
to guarantee that with probability $1-\eta$, 
\begin{equation}\label{eqn:condition_theta_prop}
|\theta^*-\lambda_0|\leq \epsilon_0\,.
\end{equation}
\end{prop}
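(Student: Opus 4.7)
The plan is to combine the uniform error bounds of \cref{lem:bound_error_new_loss} with a two-stage comparison argument at the maximizer $\theta^*$, closely following the informal derivation of \cref{sec:ica}, but making the choice of internal precision in the lemma do double duty. Without loss of generality take $\lambda_0 = 0$, set $f(\theta) = \sum_{m=0}^{M-1} p_m F^*(\theta-\lambda_m)$ and $R(\theta) = E_1(\theta)+E_2(\theta)+E_3(\theta)$, so that by \cref{eqn:theta_star_new_beta}, $\theta^* = \argmax_{\theta\in[-\pi,\pi]} |f(\theta)+R(\theta)|$.

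First I apply \cref{lem:bound_error_new_loss} on the interval $|\theta|\le \rho$ with $\rho = \pi$ and with internal precision $\epsilon = c\,q$ for a small absolute constant $c$, using failure budget $\eta/2$. Under the assumed lower bound on $\gamma$ and $N_t$ in \cref{eqn:condition_N_prop}, the hypothesis \cref{eqn:condition_N_gamma} of the lemma is met and we obtain, with probability $1-\eta/2$, that $|R(\theta)| \le 3cq$ uniformly on $[-\pi,\pi]$. Since $f(0) \ge p_0$ and $f(\theta^*) \le p_0\exp(-T^2|\theta^*|^2/2) + (1-p_0)$, the optimality of $\theta^*$ yields
\[
p_0 - 3cq \;\le\; |f(\theta^*)+R(\theta^*)| \;\le\; p_0\exp(-T^2|\theta^*|^2/2) + (1-p_0) + 3cq.
\]
Rearranging gives $T|\theta^*| = O(\log^{1/2}(p_0/q))$, and the hypothesis $T = \Omega(\Delta^{-1}\log^{1/2}(1/\epsilon_0))$ in \cref{eqn:condition_T_prop} combined with $q \ge $ const implies $|\theta^*| < \Delta/2$. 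This is the coarse separation step; it guarantees $|\theta^*-\lambda_m|\ge \Delta/2$ for all $m\ge 1$, so that $F^*(\theta^*-\lambda_m)\le \exp(-T^2\Delta^2/8)$ for $m\ge 1$.

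Next I refine the bound. I apply \cref{lem:bound_error_new_loss} a second time, now on the small interval $\rho = \Or(1/T)$ containing $\theta^*$ (already known from the coarse step) and with precision $\epsilon = T\epsilon_0^2$, again with failure budget $\eta/2$. This choice is exactly what is accommodated by \cref{eqn:condition_N_prop}, specifically by its first branch scaling as $1/(T^4\epsilon_0^4)$, and by the bound $\delta = \mc{O}(T\epsilon_0^2)$ in \cref{eqn:condition_T_prop}. The second inequalities of \cref{eqn:bound_E1,eqn:bound_E2} then give $|R(\theta^*) - R(0)| \le (\rho\gamma T)\cdot (T\epsilon_0^2) = \wt{\Or}(\epsilon_0^2)$, which is the Lipschitz-type refinement I need because a uniform bound on $|R|$ alone would only yield $|\theta^*| = O(\sqrt{|R|}/T)$. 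Combining this refined bound with $f(0)\ge p_0$, the improved upper bound $f(\theta^*) \le p_0 \exp(-T^2|\theta^*|^2/2) + (1-p_0)\exp(-T^2\Delta^2/8)$, and the optimality of $\theta^*$, I obtain
\[
|\theta^*| \;=\; \Or\!\left(\frac{1}{T}\sqrt{(1-p_0)\exp(-T^2\Delta^2/8) + \wt{\Or}(\epsilon_0^2)}\right).
\]
Taking $T \gtrsim \Delta^{-1}\log^{1/2}(1/\epsilon_0)$ makes the Gaussian term $\Or(\epsilon_0^2)$, and the refined error contribution is already $\wt{\Or}(\epsilon_0^2)$, giving $|\theta^*| \le \epsilon_0$ after absorbing the logarithmic factors into $T$ and $\gamma$. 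A union bound over the two invocations of the lemma contributes at most $\eta$ to the failure probability.

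The main obstacle will be the bookkeeping in the second stage: the ambient uniform bound $|R(\theta)| = O(q)$ that makes the coarse argument work is far too loose to give an $\epsilon_0$-accurate estimate, so one must exploit the Lipschitz (increment) parts of \cref{eqn:bound_E1,eqn:bound_E2} together with the fact that $\theta^*$ already lies in a narrow window of radius $\rho = \Or(1/T)$ around $\lambda_0$. Matching the two precisions (one tied to $q$ for separation, a much sharper one tied to $T\epsilon_0^2$ for accuracy) is exactly what forces the two scalings in the $\max\{\cdot,\cdot\}$ of \cref{eqn:condition_N_prop}, and verifying that the chosen $\gamma, \delta, N_t$ simultaneously satisfy \cref{eqn:condition_N_gamma} under both parameter regimes is the only truly delicate calculation; everything else reduces to the comparison inequalities above.
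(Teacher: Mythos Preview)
Your two-stage plan is structurally sound, but the refinement stage is both more complicated than needed and contains a bookkeeping error that prevents it from matching the proposition's hypotheses.

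\textbf{The precision mismatch.} In your second application of \cref{lem:bound_error_new_loss} you take $\epsilon = T\epsilon_0^2$. The lemma then requires $N_t=\Omega(\epsilon^{-2}[\cdots]\exp(2\alpha\gamma T))=\Omega(T^{-2}\epsilon_0^{-4}[\cdots])$, whereas \cref{eqn:condition_N_prop} only guarantees $N_t=\Omega(T^{-4}\epsilon_0^{-4}[\cdots])$; for $T>1$ the latter is strictly weaker, so your invocation is not justified. The precision that \cref{eqn:condition_N_prop} actually supports is $\epsilon = T^2\epsilon_0^2$. With that corrected value the increment estimate becomes $|R(\theta^*)-R(0)|\le \rho\gamma T\cdot T^2\epsilon_0^2 = \Or(\gamma T^2\epsilon_0^2)$, which is \emph{worse} (by a factor $\gamma$) than the uniform bound $|R(\theta)|\le \Or(T^2\epsilon_0^2)$ you already have from the first inequalities of \cref{eqn:bound_E1,eqn:bound_E2}. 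In other words, once the precision is set correctly the Lipschitz machinery buys nothing here. (A related slip: you write ``$q\ge$ const'' to conclude $|\theta^*|<\Delta/2$, but $q=p_0-1/2$ is not assumed bounded away from zero.)

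\textbf{How the paper proceeds.} The paper's proof avoids the increment bounds entirely for this proposition. It applies \cref{lem:bound_error_new_loss} \emph{once}, with $\rho=\pi$ and $\epsilon=\min\{T^2\epsilon_0^2,\,q/4\}$; the two branches of the $\min$ are precisely what produce the two branches of the $\max$ in \cref{eqn:condition_N_prop}. With the resulting uniform bound $|E_1|+|E_2|+|E_3|\le \min\{T^2\epsilon_0^2,q\}/4$ on all of $[-\pi,\pi]$, Step~1 (coarse separation $|\theta^*-\lambda_0|\le\Delta/2$) follows from the $q/4$ part by a direct contradiction argument, and Step~2 follows from the $T^2\epsilon_0^2/4$ part via
\[
p_0\exp\!\bigl(-T^2(\theta^*-\lambda_0)^2/2\bigr)+\tfrac{T^2\epsilon_0^2}{2}\;>\;p_0-\tfrac{T^2\epsilon_0^2}{4}
\quad\Longrightarrow\quad
\exp\!\bigl(-T^2(\theta^*-\lambda_0)^2/2\bigr)>1-\tfrac{3T^2\epsilon_0^2}{2},
\]
which immediately yields $|\theta^*-\lambda_0|\le\epsilon_0$. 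No second application of the lemma, no restriction to a narrow window, and no increment bounds are needed. The Lipschitz estimates in \cref{eqn:bound_E1,eqn:bound_E2} are reserved for the genuinely sharper refinement in \cref{lem_new_loss_improve}, where they are essential; importing them into the proof of \cref{prop_new_loss} is an unnecessary detour.
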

\begin{proof}[Proof of Proposition \ref{prop_new_loss}]  Define
\[
f(\theta)=\left|\sum^{M-1}_{m=0}p_m F(\theta-\lambda_m)+E_1(\theta)+E_2(\theta)+E_3(\theta)\right|\,.
\]
Using \cref{eqn:condition_T_prop} and \cref{eqn:condition_N_prop} in Lemma \ref{lem:bound_error_new_loss} (with $\epsilon=\min\{T^2\epsilon^2_0, q/4\}, \rho=\pi$), we can have
\[
\mathbb{P}\left(\sup_{|\theta|\leq \pi}|E_1(\theta)|+|E_2(\theta)|+|E_3(\theta)|\geq \min\left\{\frac{T^2\epsilon^2_0}{4},\frac{q}{4}\right\}\right)\leq \eta/4
\]

\textbf{Step 1: Show $|\theta^*-\lambda_0|\leq \Delta/2$.}

When $|E_1(\theta)|+|E_2(\theta)|+|E_3(\theta)|<\frac{q}{4}$, we first have 
\[
f(\lambda_0)\geq p_0-|E_1(\lambda_0|-|E_2(\lambda_0)|-|E_3(\lambda_0)|\geq p_0-\frac{q}{4}
\]
Assume $|\theta^*-\lambda_0|> \Delta/2$, since $p_0>0.5$, we have
\[
f(\theta^*)<0.5+p_0\exp\left(-\frac{T^2\Delta^2_{\lambda}}{8}\right)+|E_1(\theta^*)|+|E_2(\theta^*)|+|E_3(\theta^*)|< 0.5+\frac{q}{2}\,,
\]
where we use $T>\frac{C}{\Delta}\log^{1/2}\left(\frac{1}{\epsilon_0}\right)$ and $\epsilon_0<1/2$ in the last inequality. Thus, we have
\[
f(\theta^*)<0.5+\frac{q}{2}<p_0-\frac{q}{4}<f(\lambda_0)\,.
\]
This is contradicted to the fact that $\theta^*$ is the maximal point of $f(\theta)$. Thus, we have
\[
\mathbb{P}\left(|\theta^*-\lambda_0|\leq \Delta/2\right)\geq 1-\eta/4
\]
\textbf{Step 2: Show $|\theta^*-\lambda_0|\leq \epsilon_0$.}

When $|E_1(\theta)|+|E_2(\theta)|+|E_3(\theta)|<\frac{T^2\epsilon^2_0}{4}$ and $|\theta^*-\lambda_0|\leq \Delta/2$, we have
\[
\begin{aligned}
f(\theta^*)<&p_0\exp\left(-\frac{T^2(\theta^*-\lambda_0)^2}{2}\right)+\exp\left(-\frac{T^2\Delta^2_{\lambda}}{2}\right)+|E_1(\theta^*)|+|E_2(\theta^*)|+|E_3(\theta^*)|\\
\leq & p_0\exp\left(-\frac{T^2(\theta^*-\lambda_0)^2}{8}\right)+\frac{T^2\epsilon_0^2}{2}
\end{aligned}
\]
Because $f(\theta^*)>f(\lambda)$, we must have
\[
p_0\exp\left(-\frac{T^2(\theta^*-\lambda_0)^2}{2}\right)+\frac{T^2\epsilon^2_0}{2}>p_0-\frac{T^2\epsilon^2_0}{4}\,.
\]
Using $p_0>0.5$, we have
\[
\exp\left(-\frac{T^2(\theta^*-\lambda_0)^2}{2}\right)>1-\frac{3T^2\epsilon^2_0}{2}\,,
\]
which implies $|\theta^*-\lambda_0|\leq \epsilon_0$. This concludes the proof of \cref{eqn:condition_theta_prop}.
\end{proof}

We note that Proposition \ref{prop_new_loss} is a weaker version of \cref{thm_alg_new} since the power of $1/\epsilon_0$ in the condition of $N_t$ is $4$. Now, we start reducing this power to $2$. The idea is borrowed from \cite{DingLin2023}. We first have the following lemma:
\begin{lem}\label{lem_new_loss_improve} Given $0<\epsilon_0\leq 1/2$ and $0<\eta<1$, we assume $T,\gamma,\delta,N_t,N_{s,2}$ satisfy \cref{eqn:condition_T_prop} and \cref{eqn:condition_N_prop}. Given $\epsilon_1<\epsilon_0$, we can choose
\begin{equation}\label{eqn:condition_T_lem_2}
T=\Omega\left(\frac{1}{\Delta}\log^{1/2}\left(\frac{1}{\epsilon_1}\right)\right),\ \gamma=\Omega\left( \log^{1/2}\left(\frac{1}{T\epsilon_1}\right)\right),\ \delta=\mathcal{O}\left(\frac{\epsilon^2_1}{\gamma \epsilon_0}\right)\,,
\end{equation}
and
\begin{equation}\label{eqn:condition_N_lem_2}
N_t=\Omega\left(\frac{\gamma^2\epsilon^2_0}{T^2\epsilon^4_1}\left[\log\left(\frac{\gamma }{\eta\epsilon_1}\right)+\alpha\gamma T\right]\exp(2\alpha\gamma T)\right),\ N_{s,2}=\Theta(1)
\end{equation}
to guarantee that with probability $1-\eta$, 
\begin{equation}\label{eqn:condition_lem_2}
|\theta^*-\lambda_0|\leq \epsilon_1\,.
\end{equation}
\end{lem}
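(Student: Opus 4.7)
The plan is to bootstrap the coarse estimate of \cref{prop_new_loss} into the sharper bound $|\theta^*-\lambda_0|\leq\epsilon_1$ by re-applying \cref{lem:bound_error_new_loss} on a narrow window around $\lambda_0$ and exploiting its Lipschitz-type bounds on $E_1$ and $E_2$ rather than the pointwise bounds. Since the parameters $T,\gamma,\delta,N_t,N_{s,2}$ are assumed to satisfy both \cref{eqn:condition_T_prop,eqn:condition_N_prop} and the additional \cref{eqn:condition_T_lem_2,eqn:condition_N_lem_2}, I would first invoke \cref{prop_new_loss} with failure probability $\eta/2$ to obtain, with probability at least $1-\eta/2$, the rough localization $|\theta^*-\lambda_0|\leq\epsilon_0$, and condition on this event throughout the remainder of the argument.

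The second step is to invoke \cref{lem:bound_error_new_loss} again, this time with $\rho=\epsilon_0$ and a refined target precision $\epsilon=c\,T\epsilon_1^2/(\epsilon_0\gamma)$ for a suitable absolute constant $c>0$. The conditions $T=\Omega(\Delta^{-1}\log^{1/2}(1/\epsilon_1))$, $\gamma=\Omega(\log^{1/2}(1/(T\epsilon_1)))$, $\delta=\Or(\epsilon_1^2/(\gamma\epsilon_0))$ in \cref{eqn:condition_T_lem_2} and $N_t=\Omega(\gamma^2\epsilon_0^2/(T^2\epsilon_1^4)\cdot[\log+\alpha\gamma T]\exp(2\alpha\gamma T))$ in \cref{eqn:condition_N_lem_2} are calibrated precisely so that the three hypotheses $\gamma=\Omega(\log^{1/2}(1/\epsilon))$, $\delta=\Or(\epsilon/T)$, and $N_t=\Omega(\epsilon^{-2}[\log+\alpha\gamma T]\exp(2\alpha\gamma T))$ of \cref{lem:bound_error_new_loss} are all satisfied at this finer scale. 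The second halves of \cref{eqn:bound_E1,eqn:bound_E2} together with \cref{eqn:bound_E3} then imply, with probability at least $1-\eta/2$,
\[
|E_i(\theta)-E_i(\lambda_0)|\leq\rho\gamma T\epsilon=\Or(\epsilon_1^2)\quad(i=1,2),\qquad \sup_\theta|E_3(\theta)|=\Or(\epsilon_1^2),
\]
uniformly on $[\lambda_0-\epsilon_0,\lambda_0+\epsilon_0]$.

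With these sharper local bounds in hand I would repeat the comparison argument of Step 2 in the proof of \cref{prop_new_loss}. Writing $f(\theta)=|p_0 F(\theta-\lambda_0)+g(\theta)+E(\theta)|$ with $g(\theta)=\sum_{m\geq 1}p_m F(\theta-\lambda_m)$ and $E=E_1+E_2+E_3$, the choice $T=\Omega(\Delta^{-1}\log^{1/2}(1/\epsilon_1))$ combined with $|\theta-\lambda_0|\leq\epsilon_0\leq\Delta/2$ gives $|g(\theta)|\leq\exp(-T^2\Delta^2/8)=\Or(\epsilon_1^2)$. Working at the level of $f^2=A^2+2A\,\mathrm{Re}(B)+|B|^2$ with $A=p_0 F(\theta-\lambda_0)$ and $B=g+E$, and using $f(\theta^*)^2\geq f(\lambda_0)^2$, I would extract the inequality
\[
p_0\bigl[1-\exp(-T^2(\theta^*-\lambda_0)^2/2)\bigr]\;\lesssim\;\epsilon_1^2,
\]
after which $1-e^{-x}\geq x/2$ on $[0,1]$ and $p_0>1/2$ yield $T^2(\theta^*-\lambda_0)^2=\Or(\epsilon_1^2)$, hence $|\theta^*-\lambda_0|\leq\epsilon_1$ once the $\Or(1)$ constant is absorbed into the $\Omega$ governing $T$.

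The main obstacle is the last step: because $f$ is the modulus of a complex quantity, the naive reverse triangle inequality only furnishes $|E(\theta^*)|+|E(\lambda_0)|$ on the right-hand side, which produces at best the $\epsilon_0^2$ scaling of \cref{prop_new_loss}. The key trick is to work with $f^2$, where the error enters linearly through the cross term $2A(\theta)\,\mathrm{Re}(B(\theta))$ and the difference of this cross term between $\theta^*$ and $\lambda_0$ is controlled exactly by the Lipschitz-type bounds in \cref{eqn:bound_E1,eqn:bound_E2}, while the quadratic $|B|^2=\Or(\epsilon_1^4)$ piece is negligible. This is precisely the mechanism that saves two powers of $1/\epsilon_1$ relative to a direct re-application of \cref{prop_new_loss} and that dictates the $\gamma^2\epsilon_0^2/(T^2\epsilon_1^4)$ sample-complexity scaling of \cref{eqn:condition_N_lem_2}.
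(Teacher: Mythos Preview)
Your overall strategy---invoke \cref{prop_new_loss} for the coarse localization $|\theta^*-\lambda_0|\le\epsilon_0$, then re-apply \cref{lem:bound_error_new_loss} with $\rho=\epsilon_0$ and $\epsilon\asymp T\epsilon_1^2/(\gamma\epsilon_0)$ to obtain the Lipschitz control $|E_i(\theta^*)-E_i(\lambda_0)|=\Or(T^2\epsilon_1^2)$---is exactly what the paper does. (One arithmetic slip: $\rho\gamma T\epsilon=cT^2\epsilon_1^2$, not $\Or(\epsilon_1^2)$.) The gap is in how you close the comparison $f(\theta^*)\ge f(\lambda_0)$ via the $f^2$ expansion. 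Two claims there are wrong. First, $|B|^2$ is \emph{not} $\Or(\epsilon_1^4)$: since $B=g+E$ and $|E(\lambda_0)|$ is only $\Or(T^2\epsilon_0^2)$ from the coarse step, you have $|B(\lambda_0)|^2=\Or(T^4\epsilon_0^4)$, which is not negligible; only the \emph{difference} $|B(\theta^*)|^2-|B(\lambda_0)|^2$ is small. Second, the cross-term difference is not ``controlled exactly by the Lipschitz-type bounds'': writing it as $2A(\theta^*)\,\mathrm{Re}(B(\theta^*)-B(\lambda_0))+2[A(\theta^*)-A(\lambda_0)]\,\mathrm{Re}(B(\lambda_0))$, only the first piece is $\Or(T^2\epsilon_1^2)$; the second equals $-2x\,\mathrm{Re}(B(\lambda_0))$ with $x:=p_0[1-\exp(-T^2(\theta^*-\lambda_0)^2/2)]$ and $|\mathrm{Re}(B(\lambda_0))|=\Or(T^2\epsilon_0^2)$, and must be moved to the left and absorbed into the $p_0x$ coming from $A(\lambda_0)^2-A(\theta^*)^2$, using $T^2\epsilon_0^2\ll p_0$. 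With both fixes the $f^2$ route does close, but your proposal as written does not.

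The paper's route avoids squaring altogether and is cleaner. After replacing $E_i(\theta^*)$ by $E_i(\lambda_0)$ via the Lipschitz bound (and bounding $g$ and $E_3$ by $\Or(T^2\epsilon_1^2)$), one obtains
\[
f(\theta^*)\le\bigl|p_0e^{-T^2(\theta^*-\lambda_0)^2/2}+e\bigr|+\Or(T^2\epsilon_1^2),\qquad f(\lambda_0)\ge|p_0+e|-\Or(T^2\epsilon_1^2),
\]
with the \emph{same} $e:=E_1(\lambda_0)+E_2(\lambda_0)$ on both sides. The coarse step furnishes $|e|\le T^2\epsilon_0^2/4\le p_0\exp(-T^2\epsilon_0^2/2)$, so both moduli are of real numbers shifted by a common small complex $e$, and $f(\theta^*)\ge f(\lambda_0)$ directly yields $\exp(-T^2(\theta^*-\lambda_0)^2/2)\ge 1-\Or(T^2\epsilon_1^2)$. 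This is precisely the cancellation you are aiming for, but it isolates it without the bookkeeping that trips up the $f^2$ expansion.
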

\begin{proof} According to Proposition \ref{prop_new_loss}, we first have 
\[
\mathbb{P}\left(|\theta^*-\lambda_0|\leq \epsilon_0\right)\geq 1-\frac{\eta}{8},\quad 
\mathbb{P}\left(\sup_{|\theta|\leq \pi}|E_1(\theta)|+|E_2(\theta)|+|E_3(\theta)|\geq \frac{T^2\epsilon^2_0}{4}\right)\leq \eta/8
\]
Plugging $\rho=\epsilon_0$ and $\epsilon=\frac{T\epsilon^2_1}{\gamma \epsilon_0}$ in Lemma \ref{lem:bound_error_new_loss} and using \cref{eqn:condition_T_lem_2}, we have
\[
\mathbb{P}\left(\sup_{|\theta-\lambda_0|\leq \epsilon_0}|E_1(\theta)-E_1(\lambda_0)|\geq CT^2\epsilon^2_1\right)\leq \eta/8,\ \sup_{|\theta-\lambda_0|\leq \epsilon_0}|E_2(\theta)-E_2(\lambda_0)|\leq  CT^2\epsilon^2_1\,,\] and \[\sup_{|\theta-\lambda_0|\leq \epsilon_0}|E_3(\theta)|\leq  CT^2\epsilon^2_1\,.
\]
When $|\theta^*-\lambda_0|\leq \epsilon_0$, using the above inequalities, we have
\begin{equation}\label{eqn:upperbound_f_theta_star}
\begin{aligned}
f(\theta^*)&=\left|p_0\exp\left(-\frac{T^2(\theta^*-\lambda_0)^2}{2}\right)+\sum^{M-1}_{m=1}p_m F(\theta^*-\lambda_m)+E_1(\theta^*)+E_2(\theta^*)+E_3(\theta^*)\right|\\
&\leq \left|p_0\exp\left(-\frac{T^2(\theta^*-\lambda_0)^2}{2}\right)+E_1(\theta^*)+E_2(\theta^*)\right|+CT^2\epsilon^2_1\\
&\leq \left|p_0\exp\left(-\frac{T^2(\theta^*-\lambda_0)^2}{2}\right)+E_1(\lambda_0)+E_2(\lambda_0)\right|+CT^2\epsilon^2_1
\end{aligned}
\end{equation}
where we use $T>\frac{C}{\Delta}\log^{1/2}\left(\frac{1}{\epsilon_1}\right)$ and $\left|E_3(\theta^*)\right|\leq CT^2\epsilon^2_1$ in the first inequality, $|E_1(\theta^*)-E_1(\lambda_0)|\leq CT^2\epsilon^2_1$ and $|E_2(\theta^*)-E_2(\lambda_0)|\leq CT^2\epsilon^2_1$ in the second inequality. In the meantime, we have
\begin{equation}\label{eqn:lowerbound_f_lambda_0}
\begin{aligned}
f(\lambda_0)&=\left|p_0+\sum^{M-1}_{m=1}p_m F(\lambda_0-\lambda_m)+E_1(\lambda_0)+E_2(\lambda_0)+E_3(\lambda_0)\right|\\
&\geq \left|p_0+E_1(\lambda_0)+E_2(\lambda_0)\right|-CT^2\epsilon^2_1\,.
\end{aligned}
\end{equation}
where we use $T>\frac{C}{\Delta}\log^{1/2}\left(\frac{1}{\epsilon_1}\right)$ and $\left|E_3(\lambda_0)\right|\leq CT^2\epsilon^2_1$ in the inequality.

Because $f(\theta^*)>f(\lambda_0)$, we must have
\[
\begin{aligned}
\left|p_0+E_1(\lambda_0)+E_2(\lambda_0)\right|-\left|p_0\exp\left(-\frac{T^2(\theta^*-\lambda_0)^2}{2}\right)+E_1(\lambda_0)+E_2(\lambda_0)\right|\leq CT^2\epsilon^2_1.
\end{aligned}
\]
Because $|E_1(\lambda_0)|+|E_2(\lambda_0)|\leq \frac{T^2\epsilon^2_0}{4}\leq p_0\exp\left(-\frac{T^2\epsilon^2_0}{2}\right)$ and $p_0>0.5$, choosing the constant $C$ properly, the above inequality implies
\[
\exp\left(-\frac{T^2(\theta^*-\lambda_0)^2}{2}\right)\geq 1-\frac{3T^2\epsilon^2_1}{2}\,,
\]
which implies \cref{eqn:condition_lem_2}.
\end{proof}
Using Lemma \ref{lem_new_loss_improve}, we directly have the iteration lemma:
\begin{lem}\label{lem_new_loss_iteration} Given a decreasing positive sequence $\{\epsilon_k\}^K_{k=1}$ with $0<\epsilon_0\leq 1/2$ and $0<\eta<1$, we assume $T,\gamma,\delta,N$ satisfy \cref{eqn:condition_T_prop} and \cref{eqn:condition_N_prop} with $\epsilon_0$. Then we can choose
\begin{equation}\label{eqn:condition_T_lem_it}
T=\Omega\left(\frac{1}{\Delta}\log^{1/2}\left(\frac{1}{\epsilon_K}\right)\right),\ \gamma=\Omega\left(\log^{1/2}\left(\frac{1}{T\epsilon_K}\right)\right),\ \delta=\mathcal{O}\left(\min_{0\leq k\leq K-1}\left\{\frac{\epsilon^2_{k+1}}{\gamma \epsilon_k}\right\}\right)\,,
\end{equation}
and
\begin{equation}\label{eqn:condition_N_lem_it}
N_t=\Omega\left(\max_{0\leq k\leq K-1}\left\{\frac{\gamma^2\epsilon^2_k}{T^2\epsilon^4_{k+1}}\left[\log\left(\frac{K\gamma }{\eta\epsilon_{k+1}}\right)+\alpha\gamma T\right]\exp(2\alpha\gamma T)\right\}\right),\ N_{s,2}=\Theta(1)
\end{equation}
to guarantee that with probability $1-\eta$, 
\begin{equation}\label{eqn:condition_lem_it}
|\theta^*-\lambda_0|\leq \epsilon_K\,.
\end{equation}
\end{lem}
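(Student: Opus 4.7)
The plan is a straightforward induction on the iteration index $k$, using Lemma \ref{lem_new_loss_improve} to tighten the accuracy from $\epsilon_k$ to $\epsilon_{k+1}$ at each step. The base case $|\theta^*-\lambda_0| \leq \epsilon_0$ holds with probability at least $1-\eta/(K+1)$ by Proposition \ref{prop_new_loss} applied with failure tolerance $\eta/(K+1)$; the explicit hypothesis of the lemma is that \cref{eqn:condition_T_prop} and \cref{eqn:condition_N_prop} hold with parameter $\epsilon_0$, and the extra $\log(K+1)$ factor required to strengthen the failure probability from $\eta$ to $\eta/(K+1)$ is absorbed into the $\log(K\gamma/(\eta\epsilon_{k+1}))$ term of \cref{eqn:condition_N_lem_it}, which dominates the $\log(\gamma T/(\eta\epsilon_0))$ requirement from \cref{eqn:condition_N_prop}.

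For the inductive step from $\epsilon_k$ to $\epsilon_{k+1}$, I would invoke Lemma \ref{lem_new_loss_improve} with $(\epsilon_0,\epsilon_1)$ replaced by $(\epsilon_k,\epsilon_{k+1})$ and failure tolerance $\eta/(K+1)$, and verify that the uniform choices \cref{eqn:condition_T_lem_it}, \cref{eqn:condition_N_lem_it} dominate the per-step requirements \cref{eqn:condition_T_lem_2}, \cref{eqn:condition_N_lem_2}. Specifically, since $\epsilon_K$ is the smallest element of the sequence, $T=\Omega(\frac{1}{\Delta}\log^{1/2}(1/\epsilon_K))$ and $\gamma=\Omega(\log^{1/2}(1/(T\epsilon_K)))$ imply the corresponding bounds for any $\epsilon_{k+1}\geq \epsilon_K$; the minimum over $k$ in the $\delta$-condition immediately dominates each per-step bound $\delta=\mathcal{O}(\epsilon_{k+1}^2/(\gamma\epsilon_k))$; and the maximum over $k$ in the $N_t$-condition dominates the per-step requirement, with $\log K$ in $\log(K\gamma/(\eta\epsilon_{k+1}))$ accounting for the reduced per-step failure tolerance. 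Importantly, $\theta^*$ is a deterministic function of the single draw of the data set $\mathcal{D}_H$, so the iteration is not a matter of rerunning the algorithm but of successively sharpening the analysis of the same random variable: each inductive step shows that, outside an additional bad event of probability $\eta/(K+1)$ attributable to the shrinking validity region $|\theta-\lambda_0|\leq \epsilon_k$ appearing in Lemma \ref{lem:bound_error_new_loss}, the bound on $|\theta^*-\lambda_0|$ improves from $\epsilon_k$ to $\epsilon_{k+1}$.

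A union bound over the base case and the $K$ inductive refinement steps then yields $|\theta^*-\lambda_0|\leq \epsilon_K$ with total failure probability at most $(K+1)\cdot\eta/(K+1)=\eta$, which is \cref{eqn:condition_lem_it}. The main obstacle is bookkeeping rather than any new analytic input: one must carefully verify that the uniform choices of $T$, $\gamma$, $\delta$, and $N_t$ in \cref{eqn:condition_T_lem_it}, \cref{eqn:condition_N_lem_it} simultaneously dominate every per-iteration constraint imposed by Lemma \ref{lem_new_loss_improve} across all $k=0,\ldots,K-1$, and that the union-bound factor $K$ appears correctly inside the logarithm in the $N_t$ bound. No new probabilistic or analytic ingredient beyond Lemma \ref{lem_new_loss_improve} is needed.
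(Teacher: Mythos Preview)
Your proposal is correct and matches the paper's approach: the paper simply states that ``Using Lemma~\ref{lem_new_loss_improve}, we directly have the iteration lemma'' without further detail, and your induction with a union bound over the $K+1$ events, together with the verification that the uniform parameter choices dominate each per-step requirement, is exactly the argument implicit in that sentence.
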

Now, we are ready to prove \cref{thm_alg_new}:
\begin{proof}[Proof of \cref{thm_alg_new}]

Given $\epsilon>0$ small enough, we construct a decreasing positive sequence $\{\epsilon_k\}^K_{k=1}$ with
\[
\epsilon_k=\epsilon^{\frac{2-(1/2)^k}{2-(1/2)^K}}\,.
\]
With this choice, we have
\[
\frac{\epsilon^2_{k+1}}{\epsilon_k}=\epsilon^{\frac{2}{2-(1/2)^K}}
\]
Setting $o=(1/2)^{K-1}$ and noticing $\exp(2\alpha \gamma T)=\left(\frac{1}{\epsilon}\right)^{\Theta\left(\frac{\alpha}{\gamma}\right)}$ when $T=\Theta\left(\frac{1}{\Delta}\log^{1/2}\left(\frac{1}{\epsilon}\right)\right),\ \gamma=\Theta\left(\log^{1/2}\left(\frac{1}{\epsilon}\right)\right)$, Theorem \ref{thm_alg_new} is then a direct result of Lemma \ref{lem_new_loss_iteration}.
\end{proof}

\section{Numerical simulation details for Section \ref{sec:ins}}\label{sec:implemention_details_ideal}

In this section, we provide a detailed description of the classical simulation of RPE, QCELS, and QPE under global depolarizing noise:
\begin{itemize}
\item (RPE): Given a $T_{\max}$, RPE approximates $\lambda_0$ by
\[
\theta^*=\frac{-1}{T_{\max}}\mathrm{atan2}\left(q,p\right)
\]
where $p=\frac{1}{N_{RPE}}\sum^{N_{RPE}}_{n=1}X_n$, $q=\frac{1}{N_{RPE}}\sum^{N_{RPE}}_{n=1}Y_n$, 
and $\{(X_n,Y_n)\}^{N_{RPE}}_{n=1}$ are generated using the Hadamard test circuit (in \cref{fig:flowchart}) same as \eqref{eqn:X} and \eqref{eqn:Y}, which contains the global depolarizing noise in the noisy channel $\mathcal{M}_t$. In our simulation, we set $N_{\mathrm{RPE}}=10^6$ times for all $T_{\max}$.

For simplicity, after calculating $\theta^*$, we directly set the final approximation
\[
\widetilde{\lambda}_0=\mathrm{argmin}_{\theta=\theta^*+\frac{2\pi k}{T_{\max}}}\left\{\left|\theta-\lambda_0\right|\right\}\,.
\]
It is important to note that, in a real application, we need to implement RPE iteratively with an increasing sequence of $T_{\max}$ to avoid the aliasing problem~\cite{NiLiYing2023low}. 

    \item (QCELS): The dataset used in QCELS is the same as \cref{alg:main}. With given values of $N_t$ and $N_{s,2}$, we generate a dataset $\left\{(t_{n},Z_{n})\right\}^{N_t}_{n=1}$ using Algorithm \ref{alg:data}, which utilizes the same probability density $a(t)$ and produces $N_{s,2}$ data samples for each $t_n$. We generalize QCELS by incorporating the contribution from the depolarizing channel: 
    \begin{equation}\label{eqn:QCELS}
\widetilde{\lambda}_0=\theta^*_2=\mathrm{argmin}_{r\in\mathbb{C},\theta_1\in\mathbb{R},\theta_2\in[-\pi,\pi]}\frac{1}{N_t}\sum^{N_t}_{n=1}\left|Z_n-r\exp(-\theta_1|t_n|)\exp(-i\theta_2 t_n)\right|^2\,.
    \end{equation}
We would like to emphasize that the loss function in \eqref{eqn:QCELS} is different from the one used in the original paper \cite{DingLin2023}. Here, the term $\exp(-\theta_1|t_n|)$ is added to fit the global depolarizing noise. However, despite these modifications, QCELS still struggles to efficiently estimate the ground-state energy in the presence of global depolarizing noise. Our analysis in Appendix \ref{sec:analysis_QCELS} demonstrates that regardless of the magnitude of $T_{\max}$, the error of QCELS remains bounded below by a positive constant determined solely by $\alpha$. 

    \item (QPE): Before introducing how to simulate QPE classically under global depolarizing noise, we first provide a brief review of the sample distribution of classical QPE without depolarizing noise. The quantum process of QPE involves executing a series of controlled time evolution operations $e^{-iH}$ on a state $\ket{0^d}|\psi\rangle=\sum^{M-1}_{m=0} c_{m}\ket{0^d}\left|\psi_{m}\right\rangle$. Here,  $\left|\psi_{m}\right\rangle$ denotes eigenstates associated with eigenvalues $\lambda_{m}$. The quantum state resulting from these operations, prior to the application of the inverse Quantum Fourier Transform (QFT), is as follows:
\[
|\Psi\rangle=\frac{1}{\sqrt{N_t}} \sum_{j=-N_t / 2}^{N_t / 2-1}|j\rangle e^{-i j H}|\psi\rangle\,,
\]
where $N_t=2^d$. In the ideal error-free situation, after applying the inverse QFT and measuring the ancilla register, we will get $k$ with probability
\begin{equation}\label{eqn:PK}
P(k)=\sum^{M-1}_{m=0}\left|c_{n}\right|^{2} K_{N_t}\left(\frac{2 \pi k}{N_t}- \lambda_{m}\right),
\end{equation}
where $-N_t/2\leq k\leq N_t/2-1$ and $K_{N_t}$ is the squared and normalized Dirichlet kernel $K_{N_t}(\theta)=\frac{\sin ^{2}(\theta N_t / 2)}{N_t^{2} \sin ^{2}(\theta / 2)}$. To simulate QPE classically, we sample this distribution $N_{QPE}$ times to obtain a set of samples $\{k_i\}^{N_{QPE}}_{i=1}$. We can then approximate the ground state energy as $\widetilde{\lambda}_0=\frac{2\pi\min_{i}k_i}{N_t}$.

Now we analyze the effect of the global depolarizing noise. In this setting, each time the controlled time evolution $e^{-iH}$ is applied, a global depolarizing noise with strength $e^{-\alpha}$ is introduced. Since depolarizing noise commutes with all quantum channels, we can move it to the end of the computation, resulting in a depolarizing noise channel with strength $e^{-\alpha N_t / 2}$. Therefore, the quantum state we obtain before applying the inverse QFT is as follows:
\[
e^{-\alpha N_t / 2}|\Psi\rangle\langle\Psi|+\left(1-e^{-\alpha 
N_t / 2}\right) \frac{I}{M} .
\]
Applying inverse QFT, and measuring the ancilla register will yield $m$ with a probability
\begin{equation}\label{eqn:PK_noise}
P^{\prime}(k)=e^{-\alpha N_t / 2} P(k)+\frac{1-e^{-\alpha N_t / 2}}{N_t} .
\end{equation}
where $P(k)$ is defined in \eqref{eqn:PK}. Similar to before, in the classical simulation, we sample this distribution $N_{QPE}=15$ times and approximate the ground state energy as $\widetilde{\lambda}_0=\frac{2\pi\min_{i}k_i}{N_t}$.
\end{itemize}

\section{Numerical simulation details for Section \ref{sec:numerical_robust}}\label{sec:implemention_details_robust}
\subsection{Implementation of the controlled time evolution matrix}
We conduct numerical tests on our algorithm to estimate the ground-state energy of the transverse field Ising model (TFIM) defined in \cref{eqn:H_Ising}. In practice, the time evolution is implemented by discretizing time into small time steps $\tau$ using the Trotter splitting algorithm. In our numerical experiments, we set the Trotter time step to $\tau = 0.01$. However, due to the discretized evolution time, the forward and backward time evolutions in the quantum benchmarking circuits are not perfectly canceled:
\begin{equation*}
    U_{- t_\mathrm{backward}} U_{t_\mathrm{forward}} = \ket{0}\bra{0} \otimes I_M + \ket{1}\bra{1} \otimes \exp\left(- \I (t_\mathrm{forward} - t_\mathrm{backward}) H \right).
\end{equation*}
Here, $t_\mathrm{forward} \in \tau \ZZ$ is chosen so that $t_\mathrm{forward} + t_\mathrm{backward} = t$ and $\abs{t_\mathrm{forward} - t_\mathrm{backward}} < \tau$. Consequently, the measurement outcome of the quantum benchmarking circuit admits the following bound:
\begin{equation*}
    0 \le \frac{\exp(-\alpha\abs{t}) - \mathbb{E}\left(B_{t, n}\right)}{\exp(-\alpha\abs{t})} \le \frac{\tau^2}{2} \norm{H}_2^2.
\end{equation*}
Hence, mitigating the relative error of the estimation of noise strength using quantum benchmarking circuits requires that the time step $\tau$ should not be too large. 

Computation using a quantum computer relies on a basic gate set, where complex logical gate operators are synthesized from gates within this gate set. In current quantum computer architectures, a universal gate set consisting of one- and two-qubit gates is commonly chosen as the basis gate set. However, directly controlling the time evolution of the Hamiltonian requires many quantum gates to control multiple qubits simultaneously. As a result, the physical implementation of directly controlled Hamiltonian evolution becomes costly due to the need for synthesizing multi-qubit control gates.

Nevertheless, a concept known as control-free implementation, proposed in Ref. \cite{DongLinTong2022ground}, offers a solution to this challenge by exploiting the anti-commutation relation of the Hamiltonian or its components. Instead of directly controlling the Hamiltonian, controlled Hamiltonian evolution can be achieved by controlling other simple gates that obey the anti-commutation relation. In our numerical test, it is noteworthy that the Pauli string $K = Y_1 \otimes Z_2 \otimes Y_3 \otimes Z_4 \otimes \cdots$ anti-commutes with the TFIM Hamiltonian, yielding $K H K = - H$. Therefore, it is sufficient to control this Pauli string in order to implement controlled time evolution, as depicted in \cref{fig:TFIM_circuits}.

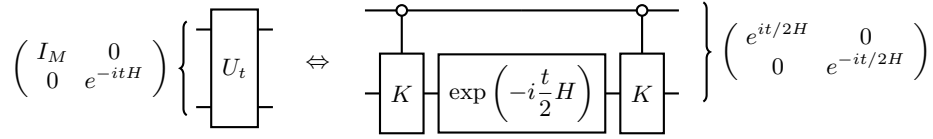
\begin{figure}[htbp]
    \centering
\begin{quantikz}[column sep=0.2cm]
     \lstick[2]{$\left(\begin{array}{cc}
        I_M & 0 \\
        0 & e^{- \I t H}
    \end{array}\right)$} & \gate[2]{U_t} & \qw \\
     & & \qw
  \end{quantikz} $\quad\Leftrightarrow$ \begin{quantikz}[column sep=0.2cm]
    \lstick{} & \octrl{1} & \qw & \octrl{1} & \qw & \rstick[2]{$\left(\begin{array}{cc}
        e^{\I t / 2 H} & 0 \\
        0 & e^{-\I t / 2 H}
    \end{array}\right)$}\\
     & \gate[style={inner ysep=8pt}]{K} & \gate{\displaystyle \exp\left(- \I \frac{t}{2} H\right)} & \gate[style={inner ysep=8pt}]{K} & \qw & 
  \end{quantikz}
    \caption{Controlled time evolution of the TFIM Hamiltonian without directly controlling the Hamiltonian. These quantum circuits are equivalent in the sense that the measurement probabilities of Hadamard tests are the same.\label{fig:TFIM_circuits}}
\end{figure}

\subsection{Noise models and the computation of fidelity parameters}
To emulate the noisy quantum device, we perform noisy quantum simulation using \textsf{IBM Qiskit}. The numerical tests are run with the multiple types of quantum noises which are listed in \cref{tab:noise}. 

\begin{table}[htbp]\renewcommand{\arraystretch}{2}
    \centering
        \begin{tabular}{p{4cm}|>{\centering\arraybackslash}p{5cm}|>{\centering\arraybackslash}p{5cm}}
        \hline
        \hline
                    & Single-qubit gate noise  & Two-qubit gate noise\\
        \hline
        Depolarizing noise & \multicolumn{2}{c}{$\displaystyle \mc{E}_{n, \eta}(\rho) = \eta \rho + \frac{1 - \eta}{4^n - 1} \sum_{P \in \mathsf{Pauli}(n) \backslash \{I_{2^n}\}} P \rho P^\dagger$} \\
        \hline
        Phase flip noise & $\displaystyle \mc{E}_{1, \eta}(\rho) = \eta \rho + (1 - \eta) Z \rho Z$ & $\displaystyle \mc{E}_{2, \eta}(\rho) = \mc{E}_{1, \sqrt{\eta}} \otimes \mc{E}_{1, \sqrt{\eta}}$ \\
        \hline
        Bit flip noise & $\displaystyle \mc{E}_{1, \eta}(\rho) = \eta \rho + (1 - \eta) X \rho X$ & $\displaystyle \mc{E}_{2, \eta}(\rho) = \mc{E}_{1, \sqrt{\eta}} \otimes \mc{E}_{1, \sqrt{\eta}}$\\
        \hline
        General Pauli noise & \multicolumn{2}{c}{$\displaystyle \mc{E}_{n, \eta}(\rho) = \eta \rho + (1 - \eta) \sum_{P \in \mathsf{Pauli}(n) \backslash \{I_{2^n}\}} \gamma_P P \rho P^\dagger$} \\
        \hline 
        \hline
        \end{tabular}
    \caption{Noisy quantum channels used in numerical tests.}
    \label{tab:noise}
\end{table}

To ensure a fair comparison between the numerical results obtained under different noise models, we determine the fidelity parameters for each noise model using the following procedure. We assume that the fidelity parameter $\eta_n$ remains the same for any $n$-qubit noise channel. As a result, the overall composite quantum channel can be expressed as
\begin{equation}
    \mc{E}(\rho_\mathrm{in}) = \eta_\mathrm{tot} \rho_\mathrm{exact} + (1 - \eta_\mathrm{tot}) \sigma_\mathrm{noise}.
\end{equation}
Here, the first component represents the density matrix generated by applying the exact quantum circuit, while the second component contains density matrices that include at least one unwanted operator due to the presence of noise. Although the second component may contain some exact results due to potential cancellation, the density matrix is typically considered as a garbage state. When the quantum circuit consists of $n_{g,1}$ single-qubit gates and $n_{g,2}$ two-qubit gates, the overall fidelity can be calculated as
\begin{equation}
    \eta_\mathrm{tot} = \eta_1^{n_{g,1}} \eta_2^{n_{g,2}}.
\end{equation}
We assume that the two-qubit gate is noisier than the single-qubit gate, such that the infidelities of the noise channels satisfy $1 - \eta_2 = 10(1 - \eta_1)$. Conversely, when a noise strength parameter $\alpha$ is provided, these fidelity parameters can be determined by solving the following nonlinear equation
\begin{equation}\label{eqn:noise_strength_to_params}
    e^{-\alpha |t|} = \eta_\mathrm{tot} = \eta_1^{n_{g,1}} \eta_2^{n_{g,2}}, \ \text{subject to } 1 - \eta_2 = 10(1 - \eta_1).
\end{equation}
To emulate coherent unitary noise, each gate is multiplied by a unitary operator representing the noise. For single-qubit gates acting on the $j$-th qubit, the additional noise unitary is given by $e^{- i \gamma_1 X_j} = \cos(\gamma_1) I - i \sin(\gamma_1) X_j$. For two-qubit gates acting on the $(j, k)$-th qubits, the additional noise unitary is represented by $e^{- i \gamma_2 Z_jZ_k} = \cos(\gamma_2) I - i \sin(\gamma_2) Z_j Z_k$. By decomposing these noise unitary matrices, the quantum channel can be expressed as
\begin{equation}
    \mc{E}(\rho_\mathrm{in}) = (\cos^2 \gamma_1)^{n_{g, 1}} (\cos^2 \gamma_2)^{n_{g, 2}} \rho_\mathrm{exact} + (1 - (\cos^2 \gamma_1)^{n_{g, 1}} (\cos^2 \gamma_2)^{n_{g, 2}}) \sigma_\mathrm{noise}.
\end{equation}
Therefore, by defining 
\begin{equation}\label{eqn:coherent_noise_strength}
    \eta_1 = \cos^2 \gamma_1 \text{ and } \eta_2 = \cos^2 \gamma_2,
\end{equation}
the derivation is identical to the case of general quantum noise channels in \cref{eqn:noise_strength_to_params}. 

\subsection{Choice of parameters}

In numerical simulations for \cref{alg:main} and QCELS, we set the number of time samples to $N_t=10,000$, the number of measurement samples to $N_{s,2} = 500$. 
We would like to emphasize that, as per Theorem \ref{thm_alg_new_informal}, selecting $N_{s,2}=\mathcal{O}(1)$ is adequate for efficiently estimating $\lambda_0$. However, our experimental findings indicate that employing a relatively large value of $N_{s,2}$ can enhance the method's stability in the presence of various noise sources.
We  set the maximum number of measurement samples for each quantum benchmarking circuit to $N_{s,1}=10,000$. In order to perform regression and infer the noise strength, we measure $N_B = 10$ quantum benchmarking circuits, each with a distinct evolution-time parameter. 

\section{Additional numerical results}\label{sec:appendix_additional_numerical_results}
\subsection{Energy landscape of optimization}

To assess the performance of \cref{alg:main}, we analyze the optimization landscape for each parameter set, as depicted in \cref{fig:tfim_landscape_multiple_T}. The visualizations reveal that increasing $T_\text{max}$ leads to a more concentrated optimization landscape centered around the global minimum, which closely approximates the true ground-state energy. Additionally, despite the distinct behavior exhibited by different noise types, the optimization landscapes consistently preserve the global minimum of the ideal loss function defined in \cref{eqn:loss_multi_modal_perfect}. This characteristic elucidates that our ground-state energy estimation algorithm yields highly accurate results, even in the presence of various noise types. Moreover, we illustrate the ground-state energy estimation obtained from three independent repetitions in \cref{fig:tfim_landscape_multiple_T}. The coherence observed across these multiple runs indicates that our algorithm maintains consistency throughout the process, despite differences in the generation of noisy quantum data for each realization.

\begin{figure}[htbp]
    \centering
    \includegraphics[width = \textwidth]{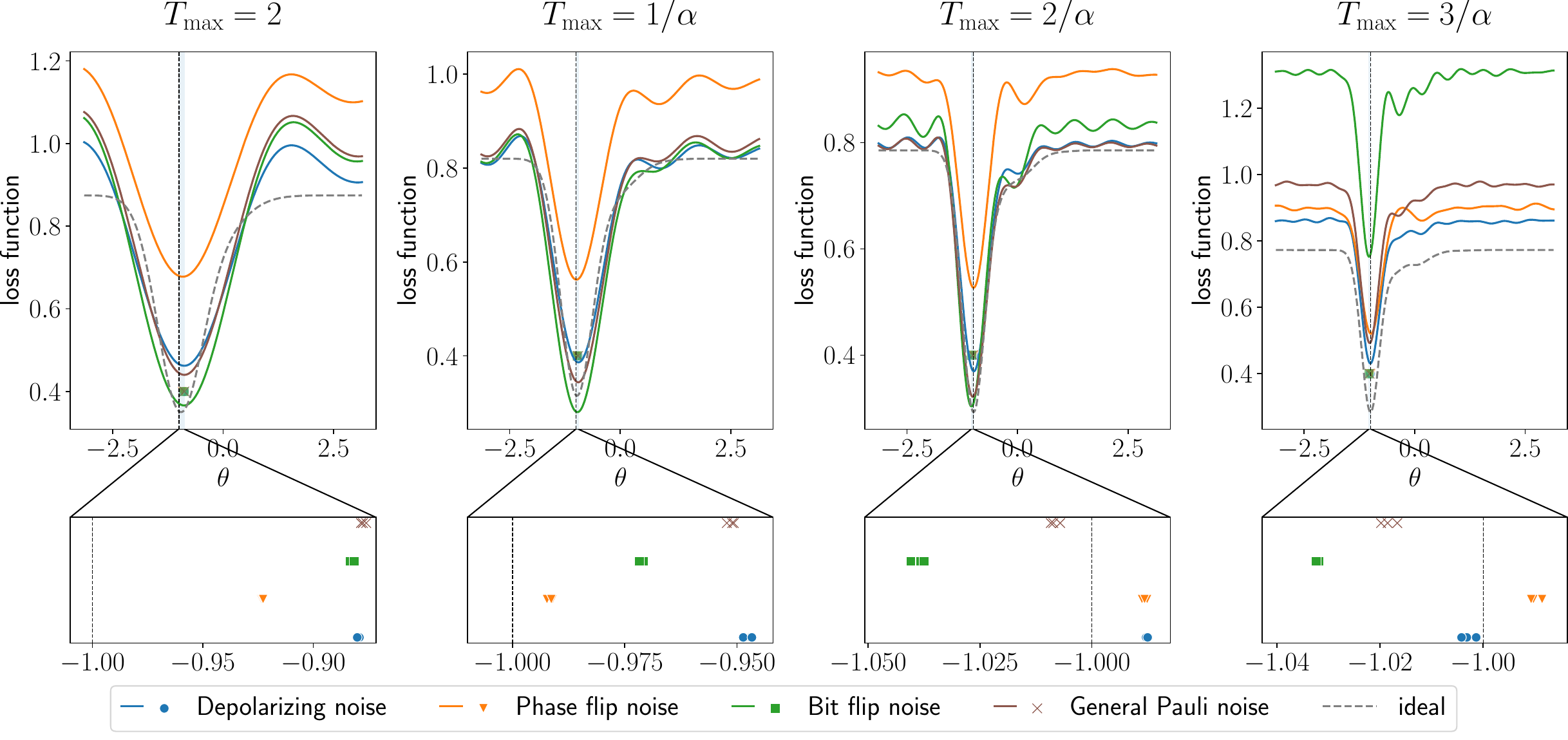}
    \caption{Optimization landscape of the ground-state energy estimation. The noise strength is set to $\alpha = 0.125$. Vertical dashed lines stand for the normalized ground-state energy $-1$. Top panels: Optimization landscape of multiple choices of $T_\text{max}$ and noise. For each set of parameters, we pick one dataset $\mc{D}_H = \{ (t_n, Z_n) \}_{n = 1}^{N_t}$ and one optimization result $(r^*, \theta^*)$ of \cref{alg:main} from three repetitions. The optimization landscape is computed with respect to \cref{eqn:fixed_beta_new} by fixing $r^*$ and $\mc{D}_H$. The ideal optimization landscape is derived from \cref{eqn:loss_multi_modal_perfect} with fixed $r^*$. Bottom panels: Zoomed-in view around the optimum. The scatters are ground-state energy estimations of three repetitions of each set of parameters. }
    \label{fig:tfim_landscape_multiple_T}
\end{figure}

\subsection{Analysis of quantum benchmarking circuits}

Following the steps outlined in \cref{alg:estimation_alpha}, we estimate the noise strength $\alpha$ by conducting a linear regression analysis on the dataset ${(\abs{t_n}, \ln(\bar{B}_n))}_{n = 1}^{N_b}$. The scatter plot of the dataset and corresponding regression lines are presented in \cref{fig:tfim_benchmark}. The numerical results demonstrate a strong fit of the linear model to the dataset, with the exception of the data affected by coherent unitary noise. This observation is further discussed and justified in \cref{sec:justify:coherent_unitary_noise}.

\begin{figure}[htbp]
    \centering
    \includegraphics[width = \textwidth]{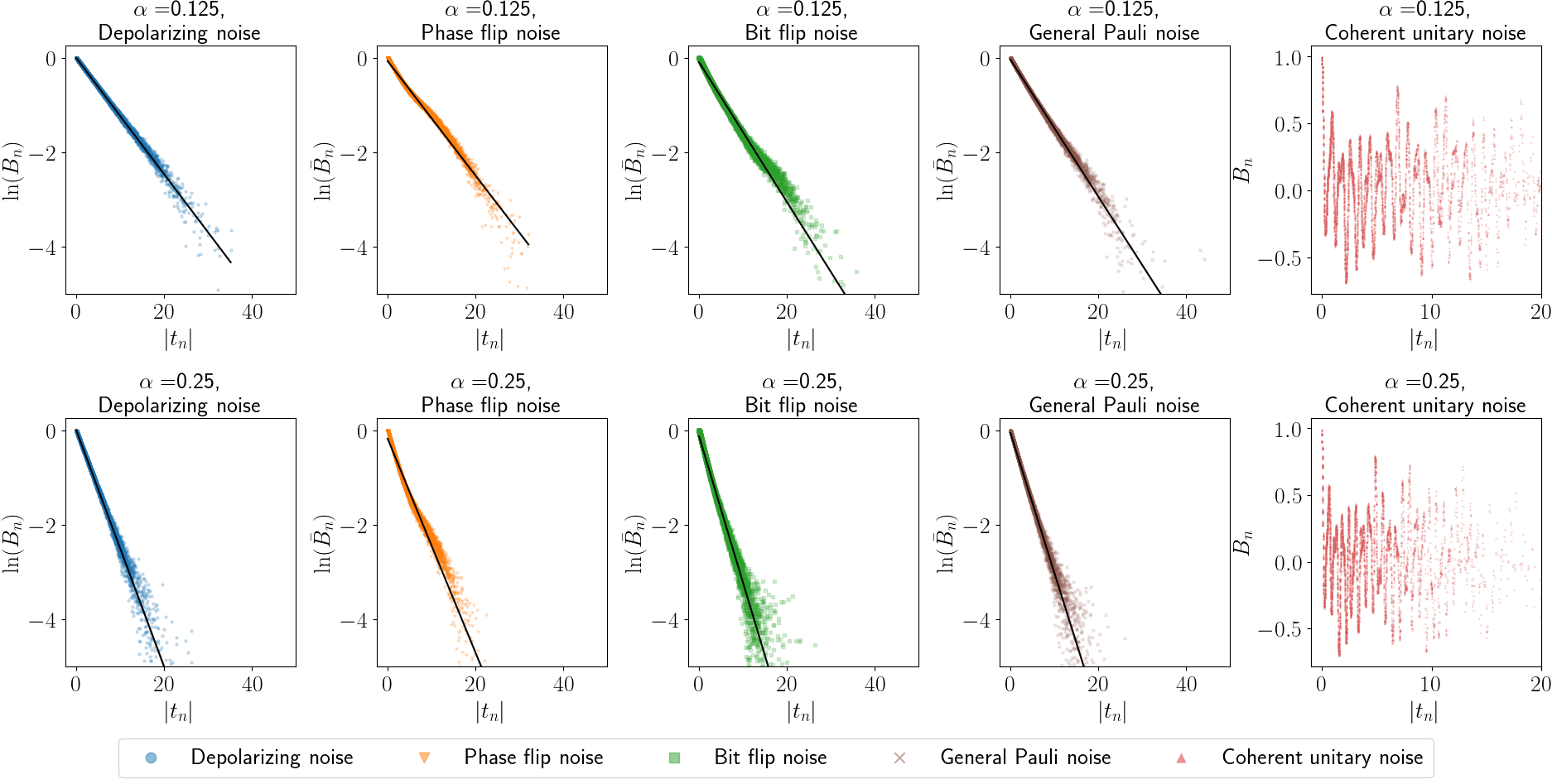}
    \caption{Data generated from quantum benchmarking circuits and its regression.}
    \label{fig:tfim_benchmark}
\end{figure}

In \cref{fig:tfim_reg_size}, we explore how the ground-state energy estimation error depends on the number of quantum benchmarking circuits used, which corresponds to the number of data points utilized for regressing the noise strength parameter. The numerical findings suggest that the dependence is not significant, and even a small number of quantum benchmarking circuits ensures that the ground-state energy estimation procedure outlined in \cref{alg:main} operates robustly.

\begin{figure}[htbp]
    \centering
    \includegraphics[width = \textwidth]{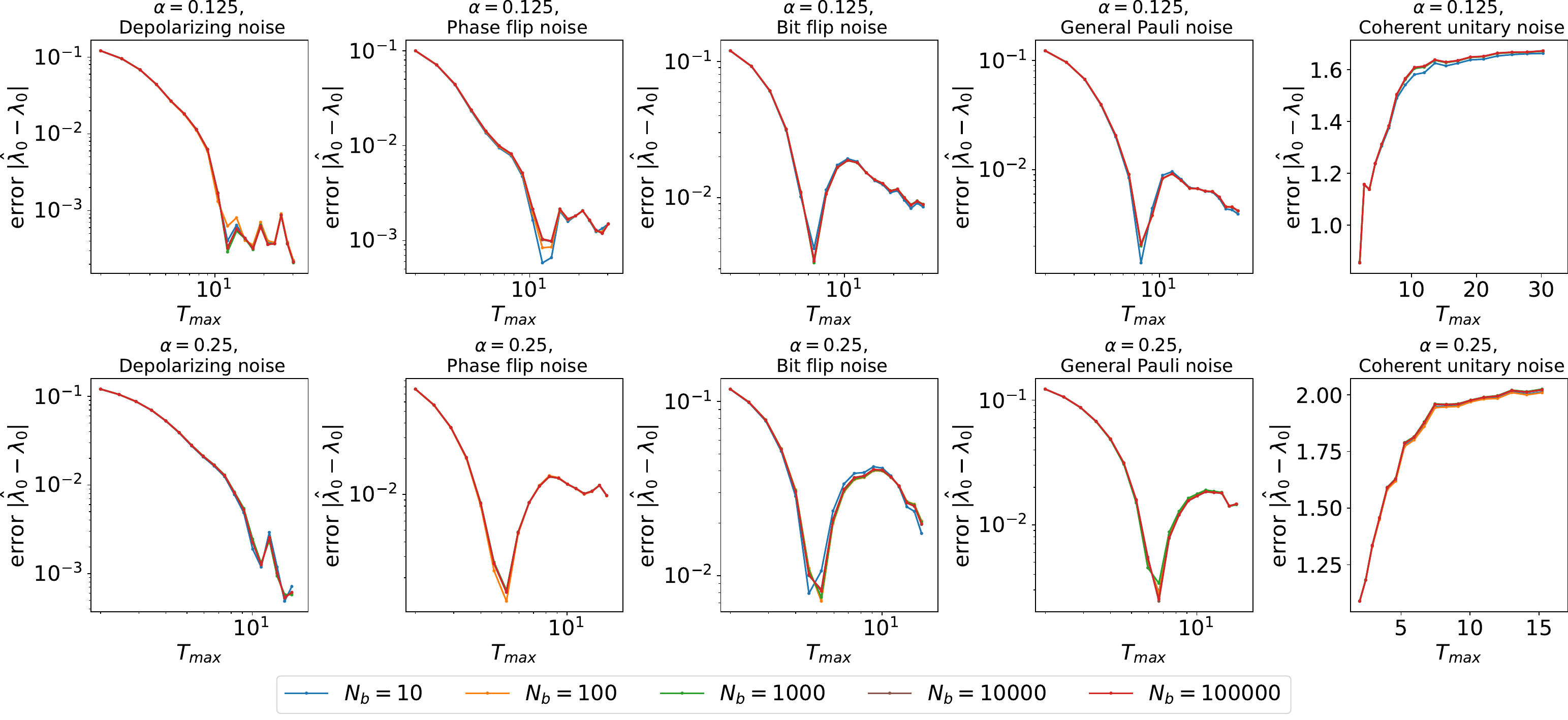}
    \caption{Accuracy of ground-state energy estimation with varying numbers of quantum benchmarking circuits.}
    \label{fig:tfim_reg_size}
\end{figure}

The observed results in \cref{fig:tfim_reg_size} are quite surprising. According to standard regression analysis, if the number of quantum benchmarking circuits, denoted as $N_b$, is decreased, the accuracy of noise-strength estimation is expected to diminish as $\Or(1/\sqrt{N_b N_s})$. However, \cref{fig:tfim_reg_size} reveals that even with less accurate noise-strength estimation, \cref{alg:main} is still capable of producing accurate ground-state energy estimations. To comprehend the algorithm's robustness, we visualize the optimization landscape under different noise-strength estimations using various numbers of quantum benchmarking circuits, as shown in \cref{fig:tfim_landscape}. Several observations can be made from these visualizations.

Firstly, as $N_b$ increases and the noise strength is estimated more accurately, the landscape curves converge. However, the limit curve does not precisely match the ideal landscape, indicating that the overall noise does not perfectly align with the ideal noise ansatz in all cases.

Secondly, although the landscape curves differ across cases, they retain the most crucial characteristic by attaining the global minimum at the exact ground-state energy. Therefore, even with a less accurate noise-strength estimation as the input, \cref{alg:main} can effectively identify the ground-state energy. This feature contributes to the algorithm's robustness in ground-state energy estimation.

Notably, the optimization landscape in the presence of coherent unitary noise significantly deviates from those of other quantum noises. This discrepancy arises due to the ineffectiveness of the quantum benchmarking circuit in mitigating coherent unitary noise. Furthermore, as discussed in \cref{sec:justify:coherent_unitary_noise}, coherent unitary noise introduces an intrinsic ground-state energy estimation error into the data. However, applying Pauli twirling to the quantum circuit can transform coherent unitary noise into Pauli noises, thereby enhancing the algorithm's performance by rectifying the ineffectiveness caused by coherent unitary noise.

\begin{figure}[htbp]
    \centering
    \includegraphics[width = \textwidth]{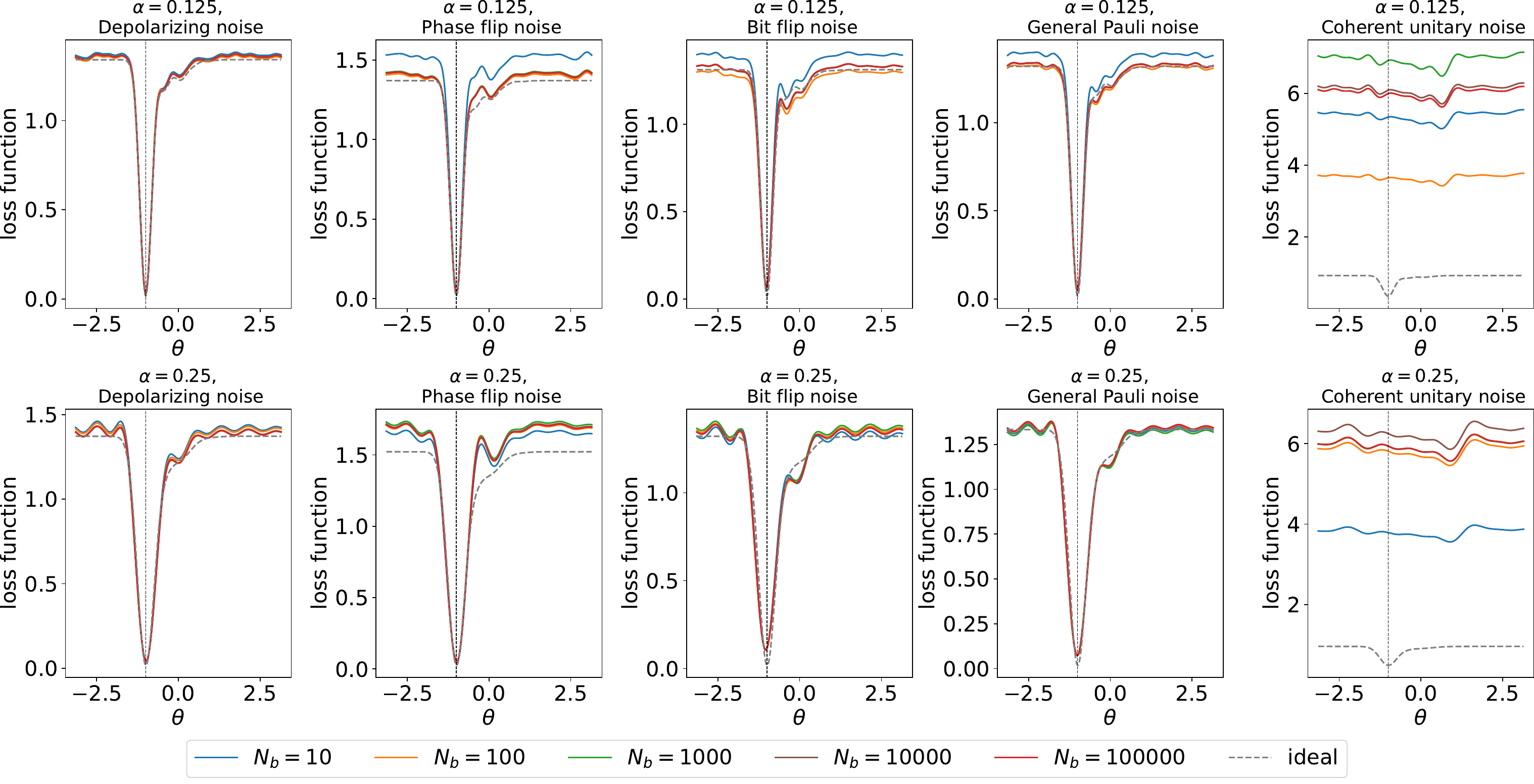}
    \caption{Optimization landscape of ground-state energy estimation with different numbers of quantum benchmarking circuits. Vertical dashed lines represent the normalized ground-state energy of $-1$.}
    \label{fig:tfim_landscape}
\end{figure}

\section{Justification of the ineffectiveness against coherent unitary noise}\label{sec:justify:coherent_unitary_noise}
The numerical results have indicated that our main algorithm is ineffective when applied to data affected by coherent unitary noise. However, by applying Pauli twirling and transforming the coherent unitary noise to Pauli noise, the algorithm's performance can be significantly improved. Understanding the reasons behind this ineffectiveness is an intriguing question. In this section, we will analyze the impact of coherent unitary noise and provide a justification for its ineffectiveness.

In our numerical simulations, the coherent unitary noise on each single-qubit gate is represented by a single-qubit rotation unitary $\exp(- \I \gamma_1 X_j)$, while on each two-qubit gate, it is represented by a two-qubit rotation unitary $\exp(- \I \gamma_2 Z_jZ_k)$. It is important to note that the dominant components in the quantum circuit are those associated with Trotterized Hamiltonian simulation, which utilize $\exp(\I g \tau X_j)$, $\exp(\I g \tau / 2 X_j)$, and $\exp(\I \tau Z_j Z_k)$ gates. In the presence of coherent unitary noise, these gates are modified as 
\begin{equation}
    \begin{split}
        & \exp(\I g \tau X_j) \to \exp(\I (g \tau - \gamma_1) X_j), \quad \exp(\I g \tau / 2 X_j) \to \exp(\I (g \tau / 2 - \gamma_1) X_j),\\
        & \exp(\I \tau Z_j Z_k) \to \exp(\I (\tau - \gamma_2) Z_j Z_k).
    \end{split}
\end{equation}
Indeed, the impact of coherent unitary noise on the gate implementation introduces a bias in the angle parameters. Consequently, the Trotter part of the quantum circuit effectively approximates the evolution of the following Hamiltonian:
\begin{equation}
    \exp\left( - \I r \tau \left(- g\left(1 - \frac{\gamma_1}{g \tau} \right) \sum_{j = 1}^L X_j - \left( 1 - \frac{\gamma_2}{\tau} \right) \sum_{j = 1}^{L - 1} Z_j Z_{j + 1}\right)\right)
\end{equation}
where $r = t / \tau$ is the number of Trotter steps. For simplicity, we assume that $\gamma_1 = \gamma_2 = \gamma$ and $g = 1$. In this case, the effective Hamiltonian becomes a constant multiple of the original Hamiltonian
\begin{equation}\label{eqn:effective_Hamiltonian}
    H_\text{eff} = (1 - \gamma / \tau) H.
\end{equation}
The noise parameters is connected with the noise strength $\alpha$ by \cref{eqn:noise_strength_to_params,eqn:coherent_noise_strength} which gives
\begin{equation}
    \exp(- \alpha t) = \cos^{2n_g}(\gamma) \Rightarrow \alpha t = - 2 n_g \ln \cos(\gamma) = n_g \gamma^2 + \Or(n_g \gamma^4).
\end{equation}
The total number of gates in the Trotter part is $n_g = r(L-1) + (r + 1) L \approx 2 r L$ when using the second order Trotter scheme. Thus, the following relation holds
\begin{equation}
    \gamma / \tau = \sqrt{\alpha / (2 \tau L)} ( 1 + \Or(\gamma^2) ).
\end{equation}
Due to the fact that the number of gates in the quantum circuit, apart from the Trotter part, remains constant and does not increase with the simulation time $t$, we can assume that the algorithm is resilient to these consistently small noises. Under this assumption, the algorithm estimates the ground-state energy of the effective Hamiltonian described in \cref{eqn:effective_Hamiltonian}. The error with respect to the ground-state energy of the exact Hamiltonian can then be expressed as follows
\begin{equation}
    \frac{\abs{\lambda_\text{min}(H_\text{eff}) - \lambda_\text{min}(H)}}{\abs{\lambda_\text{min}(H)}} = \frac{\gamma}{\tau} \approx \sqrt{\frac{\alpha}{2 \tau L}}.
\end{equation}
This equality reveals that, for a given fixed setup, the ground-state energy estimation in the presence of coherent unitary noise is inherently subject to an error of magnitude $\Or(\sqrt{\alpha})$. This occurs because the algorithm is unable to distinguish the changes caused by the coherent unitary noise in the effective Hamiltonian.

Furthermore, when applying the quantum benchmarking circuit in the presence of coherent unitary noise, the noise term does not cancel out during the forward and backward time evolutions. As a result, the Trotter part implements $\exp(\I \gamma / \tau H)$ instead of an identity matrix. Neglecting the errors from the Trotter scheme and the noise on the gates other than those in the Trotter part, the difference between the probability of measuring 0 and the probability of measuring 1 is given by
\begin{equation}
    B(t) = \braket{\psi | \cos(t \gamma / \tau H) | \psi} = \sum_{j} p_j \cos(t \gamma \lambda_j / \tau).
\end{equation}
Consequently, the output of the quantum benchmarking circuit exhibits an oscillatory function with respect to the time parameter $t$, rather than an exponential decaying factor. This significant deviation from the original ansatz for deriving the noise strength $\alpha$ renders the quantum benchmarking circuit ineffective in this scenario. These analyses align with the numerical results presented in \cref{fig:tfim_benchmark}.

\section{Complexity of robust phase estimation
(RPE)}\label{sec:intuitive_hardness}

In this section, we consider using robust phase estimation (RPE)~\cite{PhysRevA.104.069901, NiLiYing2023low} to deal with depolarizing noise. Recall that: Given a fixed $T>0$, RPE approximates $\lambda_0$ by
\begin{equation}\label{eqn:theta_star}
\theta^*=-\frac{1}{T}\mathrm{atan2}\left(q,p\right)
\end{equation}
where $\mathrm{atan2}\left(x,y\right)\in(-\pi,\pi]$ is defined as the angle of the complex number $x+iy$. Here $p=\frac{1}{N}\sum^N_{n=1}X_n$, $q=\frac{1}{N}\sum^N_{n=1}Y_n$, 
and $\{(X_n,Y_n)\}^{N}_{n=1}$ are generated using the quantum circuit in \cref{fig:flowchart} same as \eqref{eqn:X} and \eqref{eqn:Y}.

For simplicity, we assume $p_0=1$, and $p_k=0$ for $k>0$. Recall \eqref{eqn:E_n}, we can write $p,q$ as
\[
p=\exp(-\alpha T)\cos(-\lambda_0T)+\frac{1}{N}\sum^N_{n=1}E_{n,1},\quad q=\exp(-\alpha T)\sin(-\lambda_0T)+\frac{1}{N}\sum^N_{n=1}E_{n,2}\,.
\]
Here $\left\{E_{n,1},E_{n,2}\right\}^N_{n=1}$ are independent and bounded complex random variables with zero expectation, $\mathrm{Var}(E_{n,1})=\Omega(1)$, and $\mathrm{Var}(E_{n,2})=\Omega(1)$. 

First, we notice that
\[
\mathrm{atan2}\left(p,q\right)=\mathrm{atan2}\left(\exp(\alpha T)p,\exp(\alpha T)q\right)=T\lambda_0\,.
\]
This implies that when there is no random noise ($E_{n,1}=E_{n,2}=0$), the result of RPE is not affected by the global depolarizing noise.
 
Next, we consider the case with noise. For fixed $T>0$, in order to achieve $|\theta^*-\lambda_0|\leq \epsilon$, it is necessary to satisfy the condition:
\[
\left|\mathrm{atan2}\left(q,p\right)+T\lambda_0\right|\leq T\epsilon\Leftrightarrow \left|\mathrm{atan2}\left(\exp(\alpha T)q,\exp(\alpha T)p\right)+T\lambda_0\right|\leq T\epsilon\,.
\]
Consequently, to reach $\epsilon$ accuracy, we need to choose a sufficiently large value of $N$ so that:
\[
\left|\frac{\exp(\alpha T)}{N}\sum^N_{n=1}E_{n,1}\right|=\left|\frac{\exp(\alpha T)}{N}\sum^N_{n=1}E_{n,2}\right|=\mathcal{O}(T\epsilon)\,.
\]
Noticing $\mathrm{Var}(E_{n,1})=\Omega(1)$ and $\mathrm{Var}(E_{n,2})=\Omega(1)$, we need a minimum number of measurements given by 
\[
N=\Omega(\exp(2\alpha T)/(T^2\epsilon^2))\,.
\]
It is important to note that in practical implementations, RPE iteratively implements \eqref{eqn:theta_star} to address the aliasing problem (for further details, refer to~\cite{NiLiYing2023low}). Additionally, when $p_0<1$, to achieve accuracy $\epsilon$, the required maximum Hamiltonian running time $T_{\max}$ should be at least $\Omega\left(\frac{1}{\epsilon}\right)$, which implies that the total running time is at least $NT_{\max}=\Omega\left(\exp\left(\frac{2\alpha}{\epsilon}\right)\right)$.

\section{Analysis of QCELS}\label{sec:analysis_QCELS}

In this section, we provide a simple example to illustrate that under the influence of global depolarizing noise, QCELS is unable to achieve arbitrarily small errors in estimating the ground state energy.

As discussed in \cref{sec:implemention_details_ideal}, the approximation $\wt{\lambda}_0$ from QCELS is constructed by solving the following optimization problem:
\[
\wt{\lambda}_0=\theta^*_2,\quad (r^*,\theta^*_1,\theta^*_2)=\mathrm{argmin}_{r\in\mathbb{C},\theta_1\in\mathbb{R},\theta_2\in[-\pi,\pi]}\frac{1}{N_t}\sum^{N_t}_{n=1}\left|Z_n-r\exp(-\theta_1|t_n|)\exp(-i\theta_2 t_n)\right|^2\,.
\]
Define 
\[
\begin{aligned}
L(r,\theta_1,\theta_2)=&\frac{1}{N_t}\sum^{N_t}_{n=1}\left|Z_n-r\exp(-\theta_1|t_n|)\exp(-i\theta_2 t_n)\right|^2\\
=&\frac{1}{N_t}\sum^{N_t}_{n=1}\exp(-2\theta_1|t_n|)\left|\exp(\theta_1|t_n|+i\theta_2 t_n)Z_n-r\right|^2
\end{aligned}\,.
\]
Since $L$ is a quadratic function for $r$, we obtain
\[
r^*=\frac{1}{C_{\theta_1^*} N_t}\sum^{N_t}_{n=1}Z_n\exp(-\theta^*_1|t_n|+i\theta^*_2 t_n)\,,
\]
where $C_{\theta_1^*}=\frac{1}{N_t}\sum^{N_t}_{n=1}\exp(-2\theta^*_1|t_n|)$.
Given the value of $\theta_1^*$, we can reformulate the maximization problem for $\theta_2^*$. Specifically, plugging this into $L$, we further obtain that 
\[
\begin{aligned}
\theta^*_2&=\mathrm{argmax}_{\theta_2\in[-\pi,\pi]}\left|\frac{1}{N_t}\sum^{N_t}_{n=1}Z_n\exp(-\theta^*_1|t_n|+i\theta_2 t_n)\right|^2\\
&=\mathrm{argmax}_{\theta_2\in[-\pi,\pi]}\left|\sum^{M-1}_{m=0}p_m\int^\infty_{-\infty}a(t)\exp(-(\alpha+\theta^*_1)|t|)\exp(-i(\lambda_m-\theta_2)t)\rd t+E(\theta_2)\right|^2\\
&=\mathrm{argmax}_{\theta_2\in[-\pi,\pi]}\left|\sum^{M-1}_{m=0}p_m(F*G_{\theta^*_1})(\theta_2-\lambda_m)+E(\theta_2)\right|
\end{aligned}\,,
\]
where
\[
\begin{aligned}
E(\theta_2)=&\sum^{M-1}_{m=0}p_m\left(\frac{1}{N_t}\sum^{N_t}_{n=1}\exp(-(\alpha+\theta^*_1)|t_n|)\exp(-i(\lambda_m-\theta_2)t_n)\right.\\
&\left.-\int^\infty_{-\infty}a(t)\exp(-(\alpha+\theta^*_1)|t|)\exp(-i(\lambda_m-\theta_2)t)\rd t\right)\\
&+\frac{1}{N_t}\sum^{N_t}_{n=1}E_n\exp(-\theta^*_1|t_n|+i\theta_2 t_n)
\end{aligned}
\]
with $E_n$ defined in \eqref{eqn:E_n}. Here $G_{\theta^*_1}(x)=\frac{1}{2\pi}\int^{\infty}_{-\infty}\exp(-(\alpha+\theta^*_1)|t|)\exp(ixt)\rd t=\frac{\alpha+\theta^*_1}{\pi\sqrt{(\alpha+\theta^*_1)^2+x^2}}$ and the error term $E(\theta_2)$  contains the sampling and measuring error and satisfies $|E(\theta_2)|=\mathcal{O}(1/\sqrt{N_t})$.

For simplicity, we assume $\gamma=\infty$, $N_t\gg 1$, and $\theta^*_1\approx \alpha$, then we have $F(x)=\exp(-T^2x^2/2)$ and $E(\theta_2)\approx 0$. This implies
\[
\theta^*_2\approx \mathrm{argmax}_{\theta\in[-\pi,\pi]}\left|\sum^{M-1}_{m=0}p_m(F*G_{\alpha})(\theta-\lambda_m)\right|\,.
\] 
We consider a special case with $p_0>0.5$, $p_1=1-p_0$. Then, the loss function becomes to
\[
H(\theta)=p_0(F*G_{\alpha})(\theta-\lambda_0)+p_1(F*G_{\alpha})(\theta-\lambda_1)
\]
Noticing 
\[
\lim_{T\rightarrow\infty}\frac{T}{\sqrt{2\pi}}H(\theta)=p_0G_{\alpha}(\theta-\lambda_0)+p_1G_{\alpha}(\theta-\lambda_1)\,.
\]
This implies that, even when $T$ is sufficiently large,
\[
\theta^*_2\approx \mathrm{argmax}_{\theta\in[-\pi,\pi]}p_0G_{\alpha}(\theta-\lambda_0)+p_1G_{\alpha}(\theta-\lambda_1)\neq \lambda_0\,.
\]
Thus, QCELS can not achieve arbitrary small errors under global depolarizing noise. This is in contrast to the case where $\alpha=0$, resulting in $G_\alpha=\delta_0(x)$ and $\theta^*_2\approx \lambda_0$ when $T$ is sufficiently large.
\end{document}